\documentclass[12pt]{article}

\usepackage[T1]{fontenc}
\usepackage{lmodern}

\usepackage{setspace}
\usepackage[margin=1.25in]{geometry}

\usepackage[dvipsnames]{xcolor}
\usepackage{pdfpages}
\usepackage{float}
\usepackage{multirow}
\usepackage{caption}
\usepackage{subcaption}
\usepackage{epigraph}

\AtBeginDocument{%
  \everymath{\color{Violet}}%
  \everydisplay{\color{Violet}}%
}

\usepackage{mathtools}
\usepackage{amssymb}
\usepackage{amsthm}
\usepackage{amsfonts}
\usepackage{aliascnt}
\usepackage{accents}
\usepackage{dutchcal}
\usepackage{bbm}

\usepackage{enumitem}
\usepackage{sgame}
\usepackage{tikz}
\usepackage{tikz-cd}
\usetikzlibrary{calc,shapes,arrows}
\usepackage{tcolorbox}
\usepackage{listings}
\usepackage[normalem]{ulem}

\usepackage[round]{natbib}
\newcommand{\1}{\mathbbm{1}}

\newtheorem{theorem}{Theorem}[section]

\newaliascnt{proposition}{theorem}
\newtheorem{proposition}[proposition]{Proposition}
\aliascntresetthe{proposition}

\newaliascnt{lemma}{theorem}
\newtheorem{lemma}[lemma]{Lemma}
\aliascntresetthe{lemma}

\newaliascnt{corollary}{theorem}
\newtheorem{corollary}[corollary]{Corollary}
\aliascntresetthe{corollary}

\newaliascnt{claim}{theorem}
\newtheorem{claim}[claim]{Claim}
\aliascntresetthe{claim}

\theoremstyle{definition}

\newaliascnt{definition}{theorem}
\newtheorem{definition}[definition]{Definition}
\aliascntresetthe{definition}

\newaliascnt{example}{theorem}

\aliascntresetthe{example}

\newaliascnt{assumption}{theorem}
\newtheorem{assumption}[assumption]{Assumption}
\aliascntresetthe{assumption}

\newaliascnt{condition}{theorem}

\aliascntresetthe{condition}

\newaliascnt{question}{theorem}

\aliascntresetthe{question}

\newaliascnt{remark}{theorem}

\aliascntresetthe{remark}

\newaliascnt{remarks}{theorem}

\aliascntresetthe{remarks}

\newaliascnt{aside}{theorem}

\aliascntresetthe{aside}

\newaliascnt{note}{theorem}

\aliascntresetthe{note}

\usepackage{thmtools}
\usepackage{thm-restate}

\usepackage{hyperref}

\hypersetup{
    colorlinks=true,
    linkcolor=OrangeRed,
    filecolor=Thistle,
    urlcolor=Thistle,
    citecolor=Thistle,
}

\usepackage[nameinlink]{cleveref}

\crefname{theorem}{theorem}{theorems}
\Crefname{theorem}{Theorem}{Theorems}
\crefname{proposition}{proposition}{propositions}
\Crefname{proposition}{Proposition}{Propositions}
\crefname{lemma}{lemma}{lemmas}
\Crefname{lemma}{Lemma}{Lemmas}
\crefname{corollary}{corollary}{corollaries}
\Crefname{corollary}{Corollary}{Corollaries}
\crefname{claim}{claim}{claims}
\Crefname{claim}{Claim}{Claims}
\crefname{definition}{definition}{definitions}
\Crefname{definition}{Definition}{Definitions}
\crefname{example}{example}{examples}
\Crefname{example}{Example}{Examples}
\crefname{assumption}{assumption}{assumptions}
\Crefname{assumption}{Assumption}{Assumptions}

\makeatletter
\let\cref@old@isrefconsecutive\cref@isrefconsecutive
\def\cref@isrefconsecutive#1#2{%
  \begingroup
    \def\cref@assumptiontype{assumption}%
    \cref@gettype{#1}{\cref@typea}%
    \ifx\cref@typea\cref@assumptiontype
      \endgroup
      \@cref@refconsecutivefalse
    \else
      \endgroup
      \cref@old@isrefconsecutive{#1}{#2}%
    \fi
}
\makeatother

\crefname{condition}{condition}{conditions}
\Crefname{condition}{Condition}{Conditions}
\crefname{question}{question}{questions}
\Crefname{question}{Question}{Questions}
\crefname{remark}{remark}{remarks}
\Crefname{remark}{Remark}{Remarks}
\crefname{remarks}{remarks}{remarks}
\Crefname{remarks}{Remarks}{Remarks}
\crefname{aside}{aside}{asides}
\Crefname{aside}{Aside}{Asides}
\crefname{note}{note}{notes}
\Crefname{note}{Note}{Notes}
\crefname{appendix}{appendix}{appendices}
\Crefname{appendix}{Appendix}{Appendices}

\newcommand{\secref}[1]{\hyperref[#1]{\S\ref*{#1}}}

\definecolor{backcolour}{rgb}{0.63,0.79,0.95}
\lstdefinestyle{mystyle}{
  backgroundcolor=\color{backcolour},
  basicstyle=\ttfamily\footnotesize,
  breakatwhitespace=false,
  breaklines=true,
  captionpos=b,
  keepspaces=true,
  numbers=left,
  numbersep=5pt,
  showspaces=false,
  showstringspaces=false,
  showtabs=false,
  tabsize=2
}
\begin{document} 
\title{Bins}
\author{Mark Whitmeyer\thanks{Arizona State University. Email: \href{mailto:mark.whitmeyer@gmail.com}{mark.whitmeyer@gmail.com}. I solicited feedback from \href{refine.ink}{refine.ink} and used ChatGPT throughout the research process. For KS.}}
\date{\today}
\maketitle

\begin{abstract}
For an exogenous source of information and a prespecified number of messages \(k\), I characterize the set of maximal--that is to say, undominated--summaries of the information that use at most \(k\) messages. When the exogenous information produces a regular distribution over posteriors (if it has a density on its affine hull, for instance), the maximal summaries take a precise geometric form. They are hierarchical power diagrams: convex subdivisions of the space of beliefs constructed by recursively subdividing cells according to weighted proximity criteria.
\end{abstract}

\section{Brevity}

You are a simple creature, soon to encounter information about the world, and simple as you are, you cannot keep track of all it is that you will learn. Or maybe you want to communicate what you've learned to someone else but are constrained by language, or their intellect. In any case, you have only finitely many (\(k \geq 1\)) ``bins'' or messages into which you can categorize what it is that you learn (or remember). What is the optimal way to do this?

More concretely, suppose that there are three states of the world, you have a full-support prior and the exogenous piece of information will produce a distribution over posteriors with atomless full support on the simplex of beliefs. Then, when you wish to summarize your information with just four labels, an optimal summary can look like \Cref{fig4} or like \Cref{fig2}--where the posteriors in each region are collapsed to their barycenter. But not like \Cref{fig1}, and certainly not like \Cref{fig3} (again collapsing the posteriors on the different regions to their barycenters). Nor can the regions ``overlap'' for potentially optimal summaries (no randomization can be maximal). And, of course, you wish to use all four labels, certainly no fewer.

\begin{figure}[p]
    \centering
    \captionsetup[subfigure]{font=small,skip=2pt}
    \begin{subfigure}[t]{0.47\textwidth}
        \centering
        \includegraphics[
            width=\linewidth,
            height=0.30\textheight,
            keepaspectratio
        ]{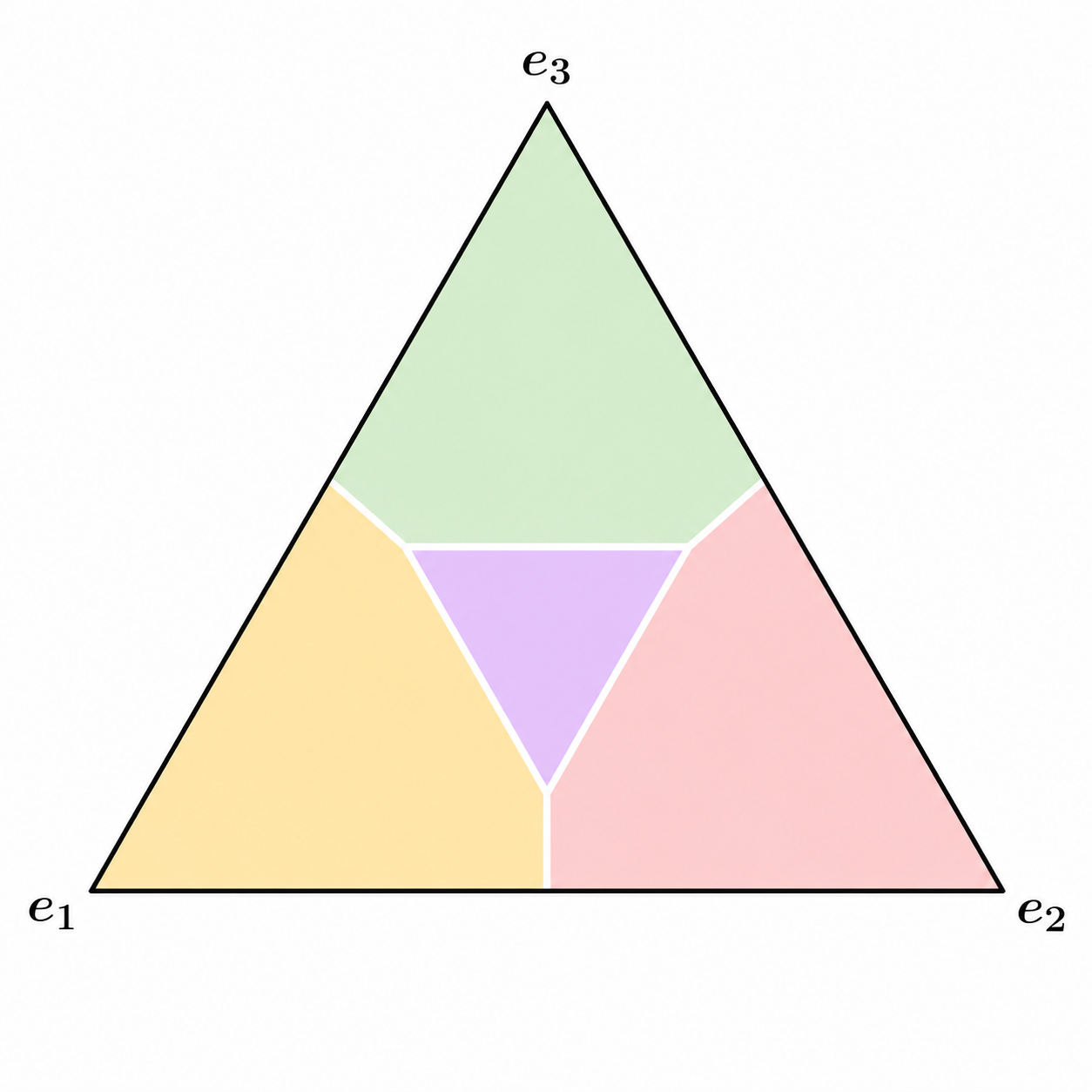}
        \caption{A Maximal Summary}
        \label{fig4}
    \end{subfigure}
    \hfill
    \begin{subfigure}[t]{0.47\textwidth}
        \centering
        \includegraphics[
            width=\linewidth,
            height=0.30\textheight,
            keepaspectratio
        ]{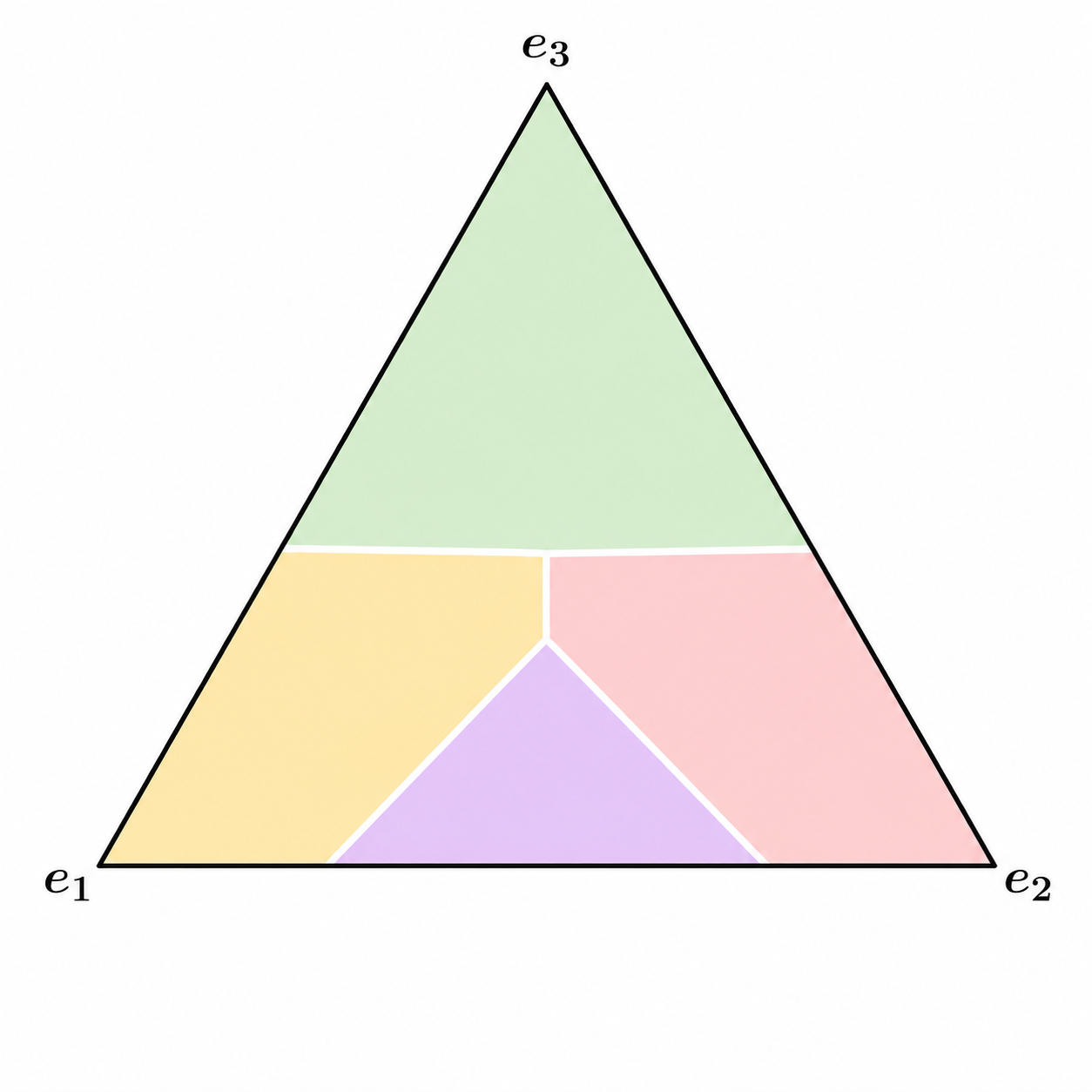}
        \caption{Another Maximal Summary}
        \label{fig2}
    \end{subfigure}

    \par\medskip

    \begin{subfigure}[t]{0.47\textwidth}
        \centering
        \includegraphics[
            width=\linewidth,
            height=0.30\textheight,
            keepaspectratio
        ]{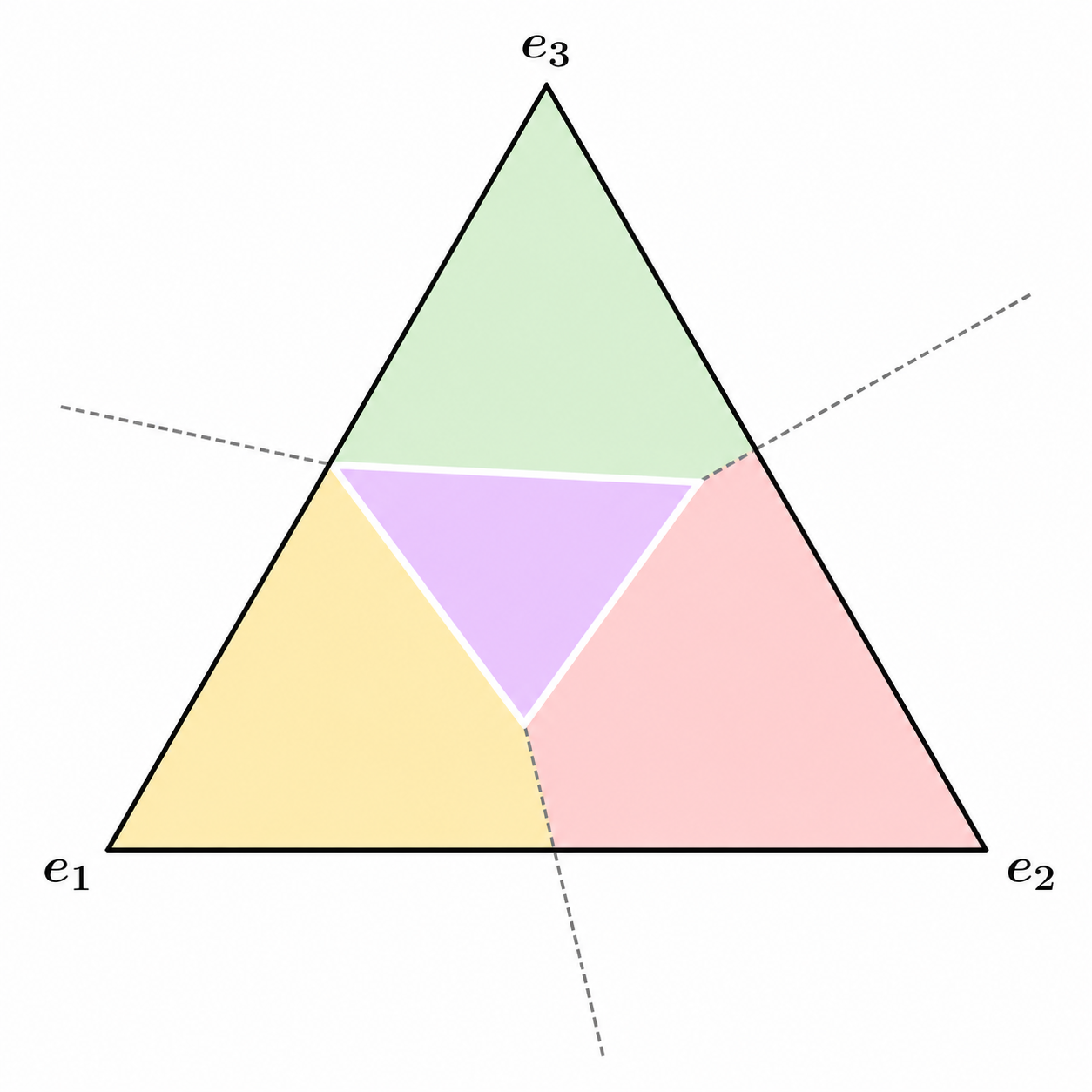}
        \caption{A Submaximal Summary}
        \label{fig1}
    \end{subfigure}
    \hfill
    \begin{subfigure}[t]{0.47\textwidth}
        \centering
        \includegraphics[
            width=\linewidth,
            height=0.30\textheight,
            keepaspectratio
        ]{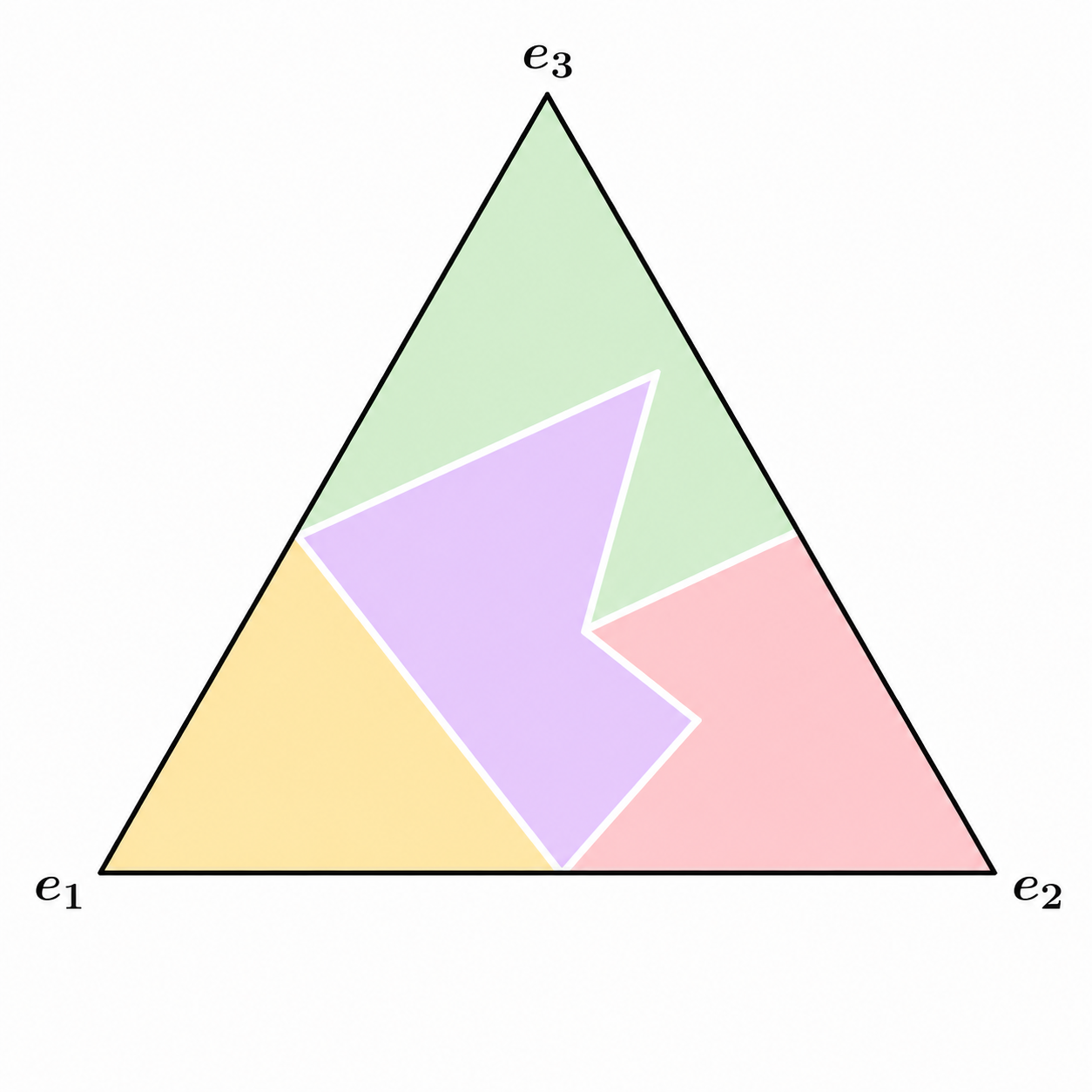}
        \caption{Another Submaximal Summary}
        \label{fig3}
    \end{subfigure}
    \caption{Some summaries (with \(e_i\) denoting the \(i\)th standard basis vector)}
    \label{fig:four-message-summary-examples}
\end{figure}

Given a fixed number of bins, a summary is \textit{maximal} if there is no other summary that yields a decision maker a higher payoff in every decision problem and a strictly higher payoff in some. That is, a summary is maximal if there is no other summary that strictly dominates it in the convex order (is a mean-preserving spread of it). The \textit{improvement complexity} of a summary is the minimal number of bins needed to find a summary that strictly convex dominates it.

My main result, \Cref{thm:maximal-summary-frontier}, says that, for a fixed number of bins \(k\) and regular distribution \(\mu\),\footnote{\textit{Viz.,} \(\mu\) is regular if it assigns every affine hyperplane zero probability (see \Cref{def:regular-reference-law}).} a summary \(\nu\) is maximal if and only if it is a hierarchical power diagram summary with \(k\) support points. Moreover, this is also equivalent to the improvement complexity of \(\nu\) equalling \(k + 1\). What is a hierarchical power diagram summary? Intuitively, think of a finite-action decision problem; in particular, consider its value function, \(V\), a maximum of finitely many affine functions, whose domain is the probability simplex (of beliefs). The full-dimensional regions on which these affine functions attain the maximum form a finite collection of polytopes called \textit{cells}. This collection of cells is precisely a \textit{power diagram}, and the summary produced by collapsing \(\mu\) on each cell to its barycenter is a \textit{power diagram summary}.

A \textit{hierarchical power diagram}, then, is obtained by taking a power diagram, then on each cell subdividing it further into subcells that themselves comprise a power diagram on that cell. Then take each of these subcells and further subdivide it, etc. A {hierarchical power diagram summary} is a (finite) summary obtained by collapsing \(\mu\) on each cell of a hierarchical power diagram to its barycenter.\footnote{\Cref{fig4} illustrates a power diagram. \Cref{fig2} is not a power diagram, but it is a hierarchical power diagram (first separate the green triangle from the others). \Cref{fig1} is a convex partition of the simplex but is not a power diagram--try to come up with a piecewise affine convex function whose epigraph projects down onto those sets...you can't.}

In short, I show that optimal categorization takes a precise geometric form. Optimal arrangements not only group beliefs into convex sets--a property typically assumed--but are recursive. Beliefs are sorted into broad categories, then within those categories subcategories, then into greater subcategories still, and so on. And at each stage, the arrangements are not merely convex partitions but the special \textit{regular} partitions known as power diagrams.

To characterize the maximal summaries, I first study an auxiliary question. Suppose we fix not only the number of bins but also the probability of each bin and look at the mean-preserving contractions of \(\mu\) with those bins and probabilities. This is a convex set (under coordinate-by-coordinate averaging)--what are its extreme points? Precisely the hierarchical power diagrams (\Cref{prop:semidiscrete-convex-order}). The first power diagram sorts beliefs into broad categories. Whenever one category is represented by several messages, another power diagram divides it into subcategories, and so on. The resulting summary is extreme (in the fixed bin-weight summaries) because any decomposition into two different summaries would have to differ within at least one category; following that difference down the hierarchy would eventually require a decomposition within a final, one-message category, which is impossible.

It is not obvious that this auxiliary result is useful for my main endeavor. A sequence of lemmata connects the two. \Cref{lem:same-vocabulary-improvement} reveals that if an \(r\)-point summary does not correspond to an extreme point, I can rearrange the original posteriors among the same \(r\) messages, without changing how often any message is used, making the summary strictly more informative. \Cref{lem:hierarchical-rigidity} shows that any strict improvement of an \(r\)-point hierarchical power diagram summary requires at least \(r+1\) bins. Finally, in \Cref{lem:one-more-message}, I find that one additional posterior value always suffices: take one existing message and reveal one further binary distinction among the original posteriors that it pools.

I end the paper by highlighting in \Cref{thm:no-best-summary} that there is no ``best summary." When there are at least two bins: there is no single \(k\)-message summary that is greater than every other. Moreover, this holds even when restricting attention to binary-action decision problems. Specifically, given any purported ``best'' summary, there is a two-action problem for which another maximal summary yields a strictly higher payoff.

\subsection{Related literature}
This is a particular exercise in \textit{coarse information design}. Earlier work studies one-dimensional state or belief spaces and shows that an optimal finite-message policy partitions the line into intervals when the value function is convex (as would be the case when it corresponds to a decision problem).\footnote{See \citet{smithSorensenTian2021informational}, \citet{tian2022optimal}, and \citet{lyuSuenZhang2026coarse}.}
One dimension, however, hides the rich geometry I uncover: every finite convex partition is an interval partition, every interval partition is a power diagram, and recursively refining its cells produces only another interval partition. \citet{aybasTurkel2026persuasion} also allow a general finite state space but fix the persuasion objective (so their exercise is quite different), while the closest geometric antecedent to mine \citep{kleinerMoldovanuStrackWhitmeyer2026} identifies the finitely-supported Lipschitz-exposed mean-preserving contractions with power diagrams.\footnote{Power diagrams are also intimately connected to choice under uncertainty \citep{greenOsband1991revealed}. \citet{lambert2019elicitation} shows that, with finitely many possible reports, strict elicitability of beliefs is equivalent to the report regions forming a power diagram. \citet{frongilloKash2021truthfulness} place this result in a general theory of elicitation.} \citet{jakobsen2026coarse} axiomatizes ``coarse Bayesian updating'' in which the simplex of beliefs is partitioned into convex sets; from this perspective, my work can be seen as characterizing the Blackwell-maximal class of updating rules under support constraints. Finally, many papers study optimal categorization, but impose convexity.\footnote{\citet{mohlin2014optimal} studies optimal categorization, restricting attention to convex categories. \citet{azrieliLehrer2007categorization} axiomatize categorization with convexity as one of their actions. \citet{gardenfors2000conceptual,gardenfors2004conceptual} models categorization via a nearest-neighbor criterion, which produces Voronoi diagrams (a cousin of the power ones). Voronoi diagrams also emerge in \citet{jager2007evolution,jagerMetzgerRiedel2011voronoi} for a similar reason.}

\section{Setup}\label{sec:regular-posterior-laws}

There is a finite state space \(\Theta\) and an exogenous source of information that generates a random posterior \(P\). Set \(\mu\coloneqq\operatorname{Law}(P)\), \(X\coloneqq\operatorname{co}(\operatorname{supp}\mu)\), and \(H\coloneqq\operatorname{aff}X\).\footnote{\(\operatorname{aff}X\) denotes the affine hull of \(X\), \textit{viz.,} the smallest affine space containing \(X\): \(\operatorname{aff}X\coloneqq\left\{\sum_{i=1}^n\alpha_ix_i\colon x_i\in X,\ \sum_{i=1}^n\alpha_i=1\right\}\). The coefficients may be negative.} Equip \(H\) with a Euclidean structure, and let \(\mathcal P(X)\) denote the set of Borel probability measures on \(X\). For \(\eta\in\mathcal P(X)\), write \(m_\eta\coloneqq\int_X p\,\eta(\mathrm dp)\).

\begin{assumption}\label{ass:regular-posterior}
The dimension \(d\coloneqq\dim H\) is positive, and \(\mu(L)=0\) for every proper affine hyperplane \(L\) of \(H\).\footnote{A \textit{proper affine hyperplane} \(L\) of \(H\) is a set \(L=\{p\in H\colon\langle a,p-p_0\rangle=0\}\) for some \(p_0\in H\) and nonzero \(a\in H-H\). The qualifier ``proper'' means that \(L\neq H\).}
\end{assumption}

I maintain \Cref{ass:regular-posterior} throughout.

\begin{definition}\label{def:regular-reference-law}
A law \(\rho\in\mathcal P(X)\) is a \textit{regular reference law} if \(\operatorname{aff}(\operatorname{supp}\rho)=H\) and \(\rho(L)=0\) for every proper affine hyperplane \(L\) of \(H\). In particular, \(\mu\) is a regular reference law under \Cref{ass:regular-posterior}.
\end{definition}

I recall the convex order: for \(\nu,\lambda\in\mathcal P(X)\), write \(\nu\preceq\lambda\) when \(\int_X\varphi(p)\nu(\mathrm dp)\leq\int_X\varphi(p)\lambda(\mathrm dp)\) for every continuous convex function \(\varphi\colon X\to\mathbb R\). Write \(\nu\prec\lambda\) when \(\nu\preceq\lambda\) and the inequality is strict for some such \(\varphi\).

\begin{definition}\label{def:summary-order}
A \textit{summary} of \(\mu\) is a law \(\nu\in\mathcal P(X)\) satisfying \(\nu\preceq\mu\). For \(k\geq1\), let
\[
\mathcal S_k(\mu)\coloneqq
\left\{
\nu\in\mathcal P(X)\colon
\nu\preceq\mu,\ 
\left|\operatorname{supp}\nu\right|\leq k
\right\}.
\]
A law \(\nu\in\mathcal S_k(\mu)\) is \textit{maximal} if there is no \(\lambda\in\mathcal S_k(\mu)\) with \(\nu\prec\lambda\). It is \textit{greatest} if \(\lambda\preceq\nu\) for every \(\lambda\in\mathcal S_k(\mu)\). A summary is \textit{finite} if it has finite support.
\end{definition}

\begin{definition}\label{def:power-diagram-summary}
A \textit{power diagram} of \(H\) is a finite collection
\(\mathcal K=(K_1,\ldots,K_r)\) of subsets of \(H\), \textit{cells}, constructed as follows. Choose affine functions \(\ell_1,\ldots,\ell_r\colon H\to\mathbb R\) with distinct linear parts, and let \(K_i\coloneqq
\left\{
x\in H\colon
\ell_i(x)\geq\ell_j(x)
\text{ for every }j\right\}\) (\(i=1,\ldots,r\)).
\end{definition}
Intuitively, each cell \(K_i\) contains the points at which the ``affine score'' \(\ell_i\) is maximal; and the cells cover \(H\).\footnote{Note that the cells may overlap where two or more scores tie. These overlaps are contained in finite unions of proper affine hyperplanes and are, consequently, null under every regular reference law.} If every \(K_i\) has nonempty interior in \(H\), then \(\mathcal K\) is an \(r\)-cell power diagram. Power diagrams are also called Laguerre diagrams or regular affine partitions.

\begin{definition}\label{def:hierarchical-power-diagram-summary}
A \textit{hierarchical power diagram} of \(H\) is a finite collection of cells obtained by beginning with a power diagram and then refining its cells a finite number of times. To refine a current cell \(K\), choose a power diagram \(\mathcal Q=(Q_1,\ldots,Q_s)\) of \(H\) and replace \(K\) by \(K\cap Q_1,\ldots,K\cap Q_s\), after discarding intersections with empty interior in \(H\).\end{definition}
Naturally, every power diagram is an unrefined hierarchical power diagram.

If \(\mathcal K=(K_1,\ldots,K_r)\) is a hierarchical power diagram and each cell has positive \(\mu\)-mass, define
\[
z_i\coloneqq\mu(K_i),
\qquad
m_i\coloneqq\frac{1}{z_i}\int_{K_i}p\,\mu(\mathrm dp),
\qquad \text{\&} \qquad
\nu_{\mathcal K}\coloneqq\sum_{i=1}^r z_i\delta_{m_i}.
\]
I call a finite summary a \textit{power diagram summary} if it equals \(\nu_{\mathcal K}\) for a power diagram \(\mathcal K\) whose cell barycenters are distinct. I call a finite summary a \textit{hierarchical power diagram summary} if the same holds for a hierarchical power diagram.

\section{An auxiliary problem}

My goal is to characterize, for a fixed \(k \geq 1\), the maximal summaries with support on \(k\) or fewer points of a regular law \(\mu\). To that end, in this section, I pursue a auxiliary question. Suppose we fix not only the number of messages \(r\) but also the probabilities of each message \(z_1, \dots, z_r\) and look at the set of mean-preserving contractions of an arbitrary regular reference law \(\rho\) with these messages and message probabilities.\footnote{I state this for some \(\rho\) not \(\mu\), because I apply the argument recursively to the conditional law within each cell of a power diagram.} This is a convex set--what are its extreme points? As we will see shortly, they are precisely the maximal summaries.

Formally, for \(r\geq1\), let
\[
\Delta_r^\circ\coloneqq
\left\{
z\in(0,\infty)^r\colon
\sum_{i=1}^rz_i=1
\right\}.
\]
For \(z\in\Delta_r^\circ\) and \(M=(m_1,\ldots,m_r)\in X^r\), set \(\nu_z(M)\coloneqq\sum_{i=1}^rz_i\delta_{m_i}\). For any \(\rho\in\mathcal P(X)\), set
\[
\mathsf{MPC}_r(z,\rho)\coloneqq
\left\{
M\in X^r\colon
\nu_z(M)\preceq\rho
\right\}.
\]

I hold fixed both the number of messages and their probabilities. Accordingly, for \(M=(m_1,\ldots,m_r),N=(n_1,\ldots,n_r)\in\mathsf{MPC}_r(z,\rho)\) and \(t\in[0,1]\), their convex combination is \(tM+(1-t)N\coloneqq(tm_i+(1-t)n_i)_{i=1}^r\). I also note that Strassen's theorem \citep[Theorem 8]{strassen1965existence} implies that \(M=(m_1,\ldots,m_r)\) belongs to \(\mathsf{MPC}_r(z,\rho)\) if and only if there are probability measures \(\eta_1,\ldots,\eta_r\) such that\footnote{Here, \(m_{\eta_i}\coloneqq\int_Xp \eta_i(\mathrm dp)\) is the barycenter of \(\eta_i\).}
\[
\sum_{i=1}^rz_i\eta_i=\rho
\qquad \text{\&} \qquad
m_{\eta_i}=m_i
\quad\text{for every }i,
\tag{1}\label{eq:semidiscrete-disintegration}
\]
which is, in turn, equivalent to the existence of measurable functions \(q_i\colon X\to[0,1]\) satisfying
\[\sum_{i=1}^rq_i=1\quad\rho\text{-a.e.}, \quad \int_Xq_i(p) \rho(\mathrm dp) =z_i, \quad \text{\&} \quad m_i =\frac{1}{z_i}\int_Xp q_i(p) \rho(\mathrm dp),
\tag{2}\label{eq:semidiscrete-density-representation}
\]
since \(z_i\eta_i\leq\rho\). These two expressions are intuitive: the way to implement such a summary is to specify, for every original posterior \(p\) the probabilities \(q_i(p)\) of placing posterior \(p\) in bin \(i\). These probabilities must sum to one, they must produce the specified message probability \(z_i\), and I define \(m_i\) to be exactly the average posterior assigned to this bin.

The central result of this section reveals that the extreme points of \(\mathsf{MPC}_r(z,\rho)\) are exactly the hierarchical power diagram summaries.

\begin{proposition}\label{prop:semidiscrete-convex-order}
Suppose \(\rho\) is a regular reference law. For every \(r\geq1\) and \(z\in\Delta_r^\circ\), the following statements hold.
\begin{enumerate}[noitemsep,nosep]
\item\label{it:semidiscrete-affine-hull} The set \(\mathsf{MPC}_r(z,\rho)\) is compact and convex, and
\[
\operatorname{aff}\mathsf{MPC}_r(z,\rho)=
\left\{
M=(m_1,\ldots,m_r)\in H^r\colon
\sum_{i=1}^rz_im_i=m_\rho
\right\}.
\]
\item\label{it:semidiscrete-extreme-points} The extreme points of \(\mathsf{MPC}_r(z,\rho)\) are the barycenter vectors of hierarchical power diagrams with \(r\) cells and cell masses \(z\) under \(\rho\).
\end{enumerate}
\end{proposition}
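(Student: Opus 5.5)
The plan is to work throughout with the density representation \eqref{eq:semidiscrete-density-representation}. This makes \(\mathsf{MPC}_r(z,\rho)\) the image of the set
\[
\mathcal Q\coloneqq\Bigl\{q=(q_1,\ldots,q_r)\colon q_i\geq0,\ \textstyle\sum_iq_i=1\ \rho\text{-a.e.},\ \int q_i\,\mathrm d\rho=z_i\Bigr\}
\]
under the map \(q\mapsto\bigl(z_i^{-1}\int p\,q_i\,\mathrm d\rho\bigr)_i\). This map is linear and weak\(^*\)-continuous on \(L^\infty(\rho)^r\).

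\emph{Part \ref{it:semidiscrete-affine-hull}.} The set \(\mathcal Q\) is convex and weak\(^*\)-compact, by Banach--Alaoglu and because the constraints are weak\(^*\)-closed. Its image is therefore compact and convex.

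Summing the barycenter formula over \(i\) gives \(\sum_iz_im_i=m_\rho\), so the affine hull is contained in the stated affine space. For the reverse inclusion I will show that the image has full relative dimension.
\begin{itemize}[noitemsep]
\item Start from \(q_i\equiv z_i\), which gives \(M=(m_\rho,\ldots,m_\rho)\).
\item Perturb to \(q_i=z_i+\varepsilon\langle a_i,p-m_\rho\rangle\) with \(\sum_ia_i=0\). For small \(\varepsilon\) this stays in \(\mathcal Q\), since \(X\) is bounded.
\item The resulting barycenters are \(m_i=m_\rho+\varepsilon z_i^{-1}\Sigma a_i\), where \(\Sigma\) is the covariance of \(\rho\) restricted to \(H-H\).
\item \(\Sigma\) is nondegenerate because \(\operatorname{aff}\operatorname{supp}\rho=H\). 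Hence every displacement \((\delta_i)\) with \(\sum_iz_i\delta_i=0\) is attained: set \(a_i=\varepsilon^{-1}\Sigma^{-1}z_i\delta_i\).
\end{itemize}

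\emph{Part \ref{it:semidiscrete-extreme-points}: identifying exposed faces.} Fix directions \(a=(a_1,\ldots,a_r)\in(H-H)^r\) and maximize \(\sum_iz_i\langle a_i,m_i\rangle=\int\sum_iq_i\langle a_i,p\rangle\,\mathrm d\rho\) over \(\mathcal Q\). This is a semi-discrete optimal transport problem with fixed target masses. Lagrangian duality, equivalently the existence of Kantorovich potentials \(b_i\), shows that \(q\) is optimal if and only if each \(q_i\) is supported on \(\{p\colon\langle a_i,p\rangle+b_i\geq\langle a_j,p\rangle+b_j\ \forall j\}\).

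Group the indices by equal \(a_i\). Within a group the \(b_i\) must coincide, because every \(z_i>0\) forces every index to receive mass. The groups then carry distinct linear parts and define a power diagram. Regularity of \(\rho\) makes the ties null. So the face exposed by \(a\) is exactly the product, over the cells \(K\), of \(\mathsf{MPC}_{|G_K|}\bigl((z_i/\rho(K))_{i\in G_K},\rho(\cdot\mid K)\bigr)\), where \(G_K\) is the group of indices assigned to \(K\).

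Each conditional law \(\rho(\cdot\mid K)\) is again a regular reference law. It charges no hyperplane, and because \(K\) has positive mass and \(\rho\) charges no hyperplane, its support cannot lie in a hyperplane, so its affine hull is \(H\).

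\emph{Part \ref{it:semidiscrete-extreme-points}: the two inclusions.} Both run by induction on \(r\); the case \(r=1\) is a singleton.
\begin{itemize}[noitemsep]
\item \emph{Hierarchical power diagrams are extreme.} The top-level diagram, with its scores, exposes the product face just described. Extreme points of a product are products of extreme points, and an extreme point of a face is extreme in the whole set. Apply induction inside each cell to the refined diagram there, using that the conditional law is regular.
\item \emph{Extreme points are hierarchical.} Let \(M\) be extreme and \(r\geq2\). By part \ref{it:semidiscrete-affine-hull} the set is not a singleton, so \(M\) lies on a proper exposed face. That face is exposed by some \(a\) that is not constant on the affine hull, which forces at least two groups. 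Inside that face, \(M\) is extreme in each factor, and each factor has fewer messages. By induction each factor is a hierarchical power diagram for the conditional law. Gluing the factors gives a hierarchical power diagram for \(\rho\) with cell masses \(z\).
\end{itemize}

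I expect the main obstacle to be the face identification. Specifically, it requires the existence of dual potentials \(b\) for the fixed-mass problem, together with the argument that indices sharing a direction must share a potential. I would obtain the potentials from the standard semi-discrete optimal transport duality, for instance by maximizing the concave dual \(b\mapsto\int\max_j(\langle a_j,p\rangle+b_j)\,\mathrm d\rho-\sum_jz_jb_j\), which is differentiable under regularity.
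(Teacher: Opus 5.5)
Your proposal is correct and follows essentially the paper's own route: the same linear perturbation \(q_i=z_i+\varepsilon\langle a_i,p-m_\rho\rangle\) for the affine hull, exposed faces identified through a prescribed-mass power diagram coming from dual potentials, each face written as a product of smaller \(\mathsf{MPC}\) sets for the (again regular) conditional laws, and induction on \(r\) in both directions. One small correction: the dual \(b\mapsto\int\max_j(\langle a_j,p\rangle+b_j)\,\rho(\mathrm dp)-\sum_jz_jb_j\) is convex and must be minimized, not maximized, and it is differentiable under regularity only after you group indices with equal \(a_j\) and use one potential per group, as the paper does with \(\Psi\); with repeated directions the minimizer has tied potentials within a group, and there the maximum is nondifferentiable across a whole positive-mass cell.
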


\Cref{it:semidiscrete-extreme-points} has a simple explanation. Fix the probabilities \(z_1,\ldots,z_r\) of the \(r\) messages and think of a vector \(M=(m_1,\ldots,m_r)\) as sorting the original posteriors among \(r\) bins: bin \(i\) must contain probability \(z_i\), and \(m_i\) is the average of the posteriors placed into that bin. The vector \(M\) is extreme if there are no two distinct vectors in \(\mathsf{MPC}_r(z,\rho)\) whose average is \(M\).

To see why an extreme vector must come from a hierarchical power diagram, begin with an extreme vector \(M\). Because \(M\) lies on the boundary of \(\mathsf{MPC}_r(z,\rho)\), there is a nonconstant linear criterion that \(M\) maximizes, perhaps together with other vectors in \(\mathsf{MPC}_r(z,\rho)\). For each posterior \(p\), this criterion assigns a ``score'' to placing \(p\) into each bin--of course, it may assign exactly the same score to several bins at every \(p\). In any case, group together the bins that it treats identically. Importantly, the criterion can distinguish between the different groups, but it says nothing about how the posteriors sent to one group should be divided among the bins in that group.

Each group must receive a fixed total probability. To see how this restriction determines the first division of the posteriors, suppose first that there are only two groups, \(A\) and \(B\). For each posterior \(p\), compare the score from placing \(p\) in \(A\) with the score from placing it in \(B\). Among all sortings producing elements of \(\mathsf{MPC}_r(z,\rho)\), a maximizer sends to \(A\) the prescribed share of posteriors for which \(A\)'s advantage over \(B\) is greatest. Otherwise, some posterior mass sent to \(A\) would have a smaller advantage than an equal amount sent to \(B\), and exchanging the two would preserve the sizes of the groups and raise the criterion.

The division between \(A\) and \(B\) is, accordingly, determined by a cutoff in \(A\)'s advantage over \(B\). Since that advantage is affine in \(p\), the boundary is an affine hyperplane. With more than two groups, I show in the proof that the corresponding cutoffs can be chosen simultaneously so that every group receives its required total probability. Each posterior is assigned to the group whose affine score wins these comparisons, producing precisely a power diagram.

But this first power diagram need not finish the sorting. If a group contains several bins, the criterion says only that the posteriors in its cell must be assigned somewhere among those bins and, crucially, not how they should be divided among them. Inside that cell, there remains a smaller problem of exactly the same kind: sort its posteriors among fewer bins, with the probability of each bin fixed.

As \(M\) is extreme, it must be that the vector of bin averages inside every first-stage cell are extreme in this smaller problem. Consequently, I can repeat the argument inside every cell that still contains more than one bin. A first power diagram makes the broad divisions; further power diagrams divide the posteriors within those regions; and later power diagrams may divide them again within the cells created earlier. Each step leaves fewer bins to be separated, so the procedure ends after finitely many steps, yielding at last a hierarchical power diagram.

Conversely, suppose that \(M\) is induced by a hierarchical power diagram and is the average of two distinct vectors in \(\mathsf{MPC}_r(z,\rho)\). The linear criterion associated with the first-stage power diagram is maximized at \(M\). Since the criterion is linear, both vectors in the average must maximize it as well and must respect the same first-stage division into groups. Within each first-stage cell, the remaining refinements form a smaller hierarchical power diagram. By induction, its vector of bin averages is extreme in the corresponding smaller problem, so the two vectors must agree within every first-stage cell. They must, therefore, both equal \(M\).

\section{Maximal summaries}

Of course, \Cref{prop:semidiscrete-convex-order} is not quite what I am after! My ultimate goal is a characterization of the maximal MPCs of \(\mu\) with a fixed number of messages in support. I now show that every nonextreme \(M\in\mathsf{MPC}_r(z,\rho)\) (whose coordinates are distinct) has some \(M'\in\mathsf{MPC}_r(z,\rho)\) with \(\nu_z(M)\prec\nu_z(M')\).

I establish this in two steps. First, I adapt an argument from \citet{kleinerMoldovanuStrackWhitmeyer2026} to show that, whenever a representation sometimes sends the same original posterior to each of two messages on a set of positive probability, I can shift a small amount of probability between those messages in opposite directions. This leaves every message probability unchanged but moves the two reported posteriors farther apart (which is a strict improvement in informativeness). Second, I show that every nonextreme \(M\) has such a representation. Indeed, if \(M\) is the midpoint of two distinct feasible vectors, averaging one representation of each produces the required overlap wherever the two representations differ.

The idea behind the next lemma is simple. Suppose a representation randomizes between messages \(i\) and \(j\) on a positive-probability set of original posteriors. On this set, there is room to transfer a small amount of probability from \(j\) to \(i\) at some posteriors and from \(i\) to \(j\) at others. I choose the two transfers to have the same total probability, so that the probabilities of the two messages are unchanged, but their barycenters move (because the posteriors transferred in the two directions have different averages).

Regularity ensures that the posteriors in the overlapping set contain enough variation to make the barycenters move in any prescribed direction \(\xi\in H-H\), and one transfer moves \(m_i\) to \(m_i+\varepsilon\xi/z_i\) and \(m_j\) to \(m_j-\varepsilon\xi/z_j\). If \(m_i\neq m_j\), I choose \(\xi=m_i-m_j\). The first perturbation then moves \(m_i\) and \(m_j\) directly away from one another while preserving their weighted average, replacing the original two posterior values by a more dispersed pair with the same probabilities and the same mean, and so strictly improves the summary in the convex order.

\begin{lemma}\label{lem:fixed-probability-transfer}
Suppose \(\rho\) is a regular reference law, \(z\in\Delta_r^\circ\), and \(M=(m_1,\ldots,m_r)\in\mathsf{MPC}_r(z,\rho)\). Let \((q_h)_{h=1}^r\) satisfy \eqref{eq:semidiscrete-density-representation}, and suppose that, for some distinct \(i\) and \(j\), the set \(\{p\in X\colon q_i(p)>0,\ q_j(p)>0\}\) has positive \(\rho\)-probability. For every \(\xi\in H-H\) and every sufficiently small \(\varepsilon>0\), there are vectors \(M^+,M^-\in\mathsf{MPC}_r(z,\rho)\) satisfying \(M=(M^++M^-)/2\) and
\[
m_h^\pm=m_h\quad\text{for }h\notin\{i,j\},
\qquad
m_i^\pm=m_i\pm\frac{\varepsilon}{z_i}\xi,
\qquad \text{\&} \qquad
m_j^\pm=m_j\mp\frac{\varepsilon}{z_j}\xi.
\]
If \(m_i\neq m_j\), choosing \(\xi=m_i-m_j\) yields \(\nu_z(M)\prec\nu_z(M^+)\).
\end{lemma}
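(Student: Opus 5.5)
Write \(A\coloneqq\{p\in X\colon q_i(p)>0,\ q_j(p)>0\}\). The plan is to perturb the representation \((q_h)\) only in the \(i\)- and \(j\)-coordinates, and only on a subset of \(A\) where both densities are bounded away from zero. Since \(\rho(A)>0\), there is \(\delta>0\) such that \(A_\delta\coloneqq\{p\colon q_i(p)\geq\delta,\ q_j(p)\geq\delta\}\) has positive \(\rho\)-mass. For a bounded measurable \(g\colon X\to\mathbb R\) supported on \(A_\delta\), set
\[
q_i^\pm\coloneqq q_i\pm\varepsilon g,\qquad q_j^\pm\coloneqq q_j\mp\varepsilon g,\qquad q_h^\pm\coloneqq q_h\ (h\notin\{i,j\}).
\]
For \(\varepsilon\|g\|_\infty\leq\delta\), each \(q_h^\pm\) takes values in \([0,1]\), and the sum over \(h\) is still \(1\). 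By \eqref{eq:semidiscrete-density-representation}, \(q^\pm\) represents some vector in \(\mathsf{MPC}_r(z,\rho)\), provided two conditions hold. First, we need \(\int g\,\mathrm d\rho=0\), so that the masses \(z_i,z_j\) are unchanged. Second, we need \(\int p\,g(p)\,\rho(\mathrm dp)=\xi\), so that \(m_i\) moves by \(\varepsilon\xi/z_i\) and \(m_j\) by \(-\varepsilon\xi/z_j\). The identity \(M=(M^++M^-)/2\) is then immediate from linearity.

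The main step is to show that for every \(\xi\in H-H\) there is such a \(g\). I will argue by linear algebra. Consider the linear map \(T\colon g\mapsto\bigl(\int g\,\mathrm d\rho,\int p\,g\,\mathrm d\rho\bigr)\) from \(L^\infty(\rho|_{A_\delta})\) into \(\mathbb R\times\operatorname{span}(H)\); it suffices that the image of \(T\) contains \(\{0\}\times(H-H)\). Suppose this fails. Then some nonzero pair \((c,a)\), with \(a\) paired against \(H-H\), annihilates the image after restricting to the affine coordinates. This gives \(c+\langle a,p\rangle=0\) for \(\rho\)-a.e.\ \(p\in A_\delta\). Cleanly: the image of \(T\) always lies in the span of \(\{(1,p)\colon p\in\operatorname{supp}(\rho|_{A_\delta})\}\). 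That span contains \(\{0\}\times(H-H)\) exactly when \(\operatorname{aff}\operatorname{supp}(\rho|_{A_\delta})=H\). If the latter failed, \(\rho|_{A_\delta}\) would be concentrated on a proper affine subspace of \(H\), hence on a proper affine hyperplane. Since \(\rho(A_\delta)>0\), this contradicts regularity of \(\rho\). So a suitable \(g\) exists, and it can be taken to be a finite linear combination of indicators of subsets of \(A_\delta\), hence bounded. Then any \(\varepsilon\leq\delta/\|g\|_\infty\) works. This surjectivity step, deriving full affine span from the hyperplane-null condition, is where I expect the only real care to be needed.

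For the final claim, take \(\xi=m_i-m_j\neq0\). Then \(\nu_z(M)\) and \(\nu_z(M^+)\) agree except that the pair \(z_i\delta_{m_i}+z_j\delta_{m_j}\) is replaced by \(z_i\delta_{m_i+\varepsilon\xi/z_i}+z_j\delta_{m_j-\varepsilon\xi/z_j}\). The combined mean is preserved, and all four points lie on the line through \(m_i,m_j\). On that line the new points lie strictly outside the segment \([m_i,m_j]\), so each old point is a convex combination of the two new points. Splitting each old atom accordingly defines a mean-preserving spread, which gives \(\nu_z(M)\preceq\nu_z(M^+)\). For strictness, I will exhibit a convex function whose integral increases strictly. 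Take \(\varphi(p)=\langle\xi,p\rangle^2\): its integral changes by
\[
z_i\bigl(\langle\xi,m_i\rangle+\varepsilon|\xi|^2/z_i\bigr)^2-z_i\langle\xi,m_i\rangle^2+\text{(analogous \(j\) term)}=2\varepsilon\langle\xi,m_i-m_j\rangle|\xi|^2+\varepsilon^2|\xi|^4\bigl(1/z_i+1/z_j\bigr),
\]
which is strictly positive because \(\langle\xi,m_i-m_j\rangle=|\xi|^2>0\).
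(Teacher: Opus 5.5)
Your proof is correct and is essentially the paper's argument: localize to the set where \(q_i,q_j\geq\delta\), shift mass between messages \(i\) and \(j\) by \(\pm\varepsilon g\) with \(\int g\,\mathrm d\rho=0\) and \(\int p\,g\,\mathrm d\rho=\xi\), use regularity to get full affine span of \(\rho|_{A_\delta}\), and certify strictness with a quadratic test function. The one difference is how you obtain \(g\): you use a finite-dimensional annihilator argument, and it is that argument that proves surjectivity (your ``cleanly'' restatement only shows the image lies inside the span, which is the wrong inclusion), whereas the paper writes \(g\) explicitly as \(\mathbf 1_{A_n}(p)\langle\Gamma_n^{-1}\xi,p-m_n\rangle\) using the invertible localized covariance operator \(\Gamma_n\).
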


\Cref{lem:fixed-probability-transfer} has another immediate consequence. If a representation of \(M\) assigned a positive-probability set of original posteriors to two messages with positive probability, the lemma would express \(M\) as the average of two distinct feasible vectors. Consequently, every representation of an extreme vector assigns almost every original posterior to a single message. Moreover, such a representation must be unique: averaging two different representations would produce one that assigns some posteriors to more than one message. I now state this formally:

\begin{lemma}\label{lem:extreme-unique-disintegration}
Suppose \(\rho\) is a regular reference law, \(z\in\Delta_r^\circ\), and \(M=(m_1,\ldots,m_r)\) is an extreme point of \(\mathsf{MPC}_r(z,\rho)\). There is a unique family of probability measures \(\eta_1,\ldots,\eta_r\) satisfying \eqref{eq:semidiscrete-disintegration}. Moreover, there is a measurable partition \((D_1,\ldots,D_r)\) of \(X\), up to \(\rho\)-null sets, such that \(\eta_i=\rho|_{D_i}/z_i\) for every \(i\). If \(M\) is induced by a hierarchical power diagram \((K_1,\ldots,K_r)\), then \(D_i=K_i\) \(\rho\)-almost everywhere for every \(i\).
\end{lemma}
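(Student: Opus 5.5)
The plan is to work with the density representation \eqref{eq:semidiscrete-density-representation} rather than the measures directly. By Strassen's theorem, families \((\eta_h)\) satisfying \eqref{eq:semidiscrete-disintegration} correspond bijectively (up to \(\rho\)-null sets) to families \((q_h)\) satisfying \eqref{eq:semidiscrete-density-representation}, via \(z_h\eta_h=q_h\rho\). Existence of at least one representation is immediate from \(M\in\mathsf{MPC}_r(z,\rho)\).

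First I show that any representation \((q_h)\) of an extreme \(M\) is \(\{0,1\}\)-valued \(\rho\)-a.e. Suppose not. Since \(\sum_h q_h=1\) a.e., on the set where some \(q_h\in(0,1)\) at least two of the \(q_h\) are positive. This set is a finite union over pairs \(i\neq j\) of the sets \(\{q_i>0,\ q_j>0\}\), so one of these has positive \(\rho\)-probability. I then apply \Cref{lem:fixed-probability-transfer} with any nonzero \(\xi\in H-H\), which exists because \(d>0\), and small \(\varepsilon>0\). This writes \(M=(M^++M^-)/2\) with \(M^+\neq M^-\), since their \(i\)th coordinates differ by \(2\varepsilon\xi/z_i\neq0\). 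That contradicts extremality. Hence \(q_h=\mathbbm 1_{D_h}\) a.e. for sets \(D_h\coloneqq\{q_h=1\}\), which form a measurable partition of \(X\) up to null sets, and \(\eta_h=\rho|_{D_h}/z_h\).

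For uniqueness, take two representations \((q_h)\) and \((q'_h)\) of \(M\). Their average \(((q_h+q'_h)/2)\) again satisfies \eqref{eq:semidiscrete-density-representation} with the same \(z\) and the same barycenters \(m_h\), because all the constraints are linear in \(q\). By the previous paragraph the average is \(\{0,1\}\)-valued a.e. But \((a+b)/2\in\{0,1\}\) with \(a,b\in\{0,1\}\) forces \(a=b\). So \(q_h=q'_h\) a.e. for every \(h\), and the \(\eta_h\) coincide.

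For the last claim, suppose \(M\) is induced by a hierarchical power diagram \((K_1,\ldots,K_r)\), with \(z_h=\rho(K_h)\) and \(m_h\) the barycenter of \(\rho\) on \(K_h\). I take \(q_h=\mathbbm 1_{K_h}\). The cells cover \(H\), and their pairwise overlaps lie in finitely many proper affine hyperplanes (each refinement step only adds such hyperplane boundaries), so these overlaps are \(\rho\)-null by regularity. Hence \(\sum_h\mathbbm 1_{K_h}=1\) a.e., and this family represents \(M\). By \Cref{prop:semidiscrete-convex-order}, \(M\) is extreme, so uniqueness gives \(D_h=K_h\) \(\rho\)-a.e.

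The main step needing care is the reduction to \Cref{lem:fixed-probability-transfer}: I must ensure that a non-indicator representation really produces a positive-probability pairwise overlap, and that the two perturbed vectors are genuinely distinct. Both are handled above, by the finite union over pairs and by the choice \(\xi\neq0\). Checking that hierarchical cell overlaps are contained in hyperplanes is routine bookkeeping.
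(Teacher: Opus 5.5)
Your proposal is correct and follows the paper's proof essentially step for step. You apply \Cref{lem:fixed-probability-transfer} with a nonzero direction \(\xi\in H-H\) to force every implementation to be \(\{0,1\}\)-valued, average two implementations to obtain uniqueness, and take \(q_i=\mathbf 1_{K_i}\) for the hierarchical case; your extra checks (the finite union over pairs, \(M^+\neq M^-\), and \(\rho\)-nullity of cell overlaps and discarded pieces) only make explicit what the paper leaves implicit.
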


My next result is the crucial bridge connecting \Cref{prop:semidiscrete-convex-order} to maximal summaries. Namely, nonextremality leaves room to improve the summary without adding a message. To wit, if \(M\) is the average of two different vectors, take one assignment of the original posteriors producing each vector and average the two assignments. Where the assignments differ, the average sometimes sends the same posterior to two different messages. \Cref{lem:fixed-probability-transfer} then sorts those posteriors more sharply between the two messages, preserving every message probability but moving their distinct barycenters farther apart, yielding a strictly more informative summary that still uses the same number of messages. Formally:
\begin{lemma}\label{lem:same-vocabulary-improvement}
Suppose \(\rho\) is a regular reference law. Fix \(z\in\Delta_r^\circ\), and suppose \(M\in\mathsf{MPC}_r(z,\rho)\) has distinct coordinates and is not an extreme point. Every neighborhood of \(M\) contains a vector \(M'\in\mathsf{MPC}_r(z,\rho)\) with distinct coordinates such that \(\nu_z(M)\prec\nu_z(M')\), i.e., there is a strict improvement that uses the same message probabilities and the same number of messages.
\end{lemma}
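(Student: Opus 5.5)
The plan is to reduce the statement to \Cref{lem:fixed-probability-transfer} by producing a representation of \(M\) that splits a positive-probability set of original posteriors between two messages. Since \(M\) is not extreme, write \(M=(N+N')/2\) with \(N\neq N'\) both in \(\mathsf{MPC}_r(z,\rho)\). By \eqref{eq:semidiscrete-density-representation}, pick representations \((q_h)\) of \(N\) and \((q'_h)\) of \(N'\). Their average \(\bar q_h\coloneqq(q_h+q'_h)/2\) is again admissible. It sums to one \(\rho\)-a.e., gives masses \(z_h\), and, because the constraint on \(m_h\) is linear in \(q_h\) once \(z_h\) is fixed, it has barycenters \((n_h+n'_h)/2=m_h\). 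So \((\bar q_h)\) represents \(M\).

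Next I will show that for some distinct \(i,j\) the set \(\{\bar q_i>0,\ \bar q_j>0\}\) has positive \(\rho\)-mass. Suppose not. Then, up to \(\rho\)-null sets, \(\bar q\) is an indicator partition \((D_h)\). Because \(0\le q_h,q'_h\) and \(\bar q_h=0\) off \(D_h\), both \(q_h\) and \(q'_h\) vanish off \(D_h\). On \(D_h\) we have \(\sum_g q_g=1\) with \(q_g=0\) for \(g\neq h\), so \(q_h=1\) there, and likewise \(q'_h=1\). Hence \(q=q'\) \(\rho\)-a.e., which forces \(N=N'\), a contradiction. This is the step I expect to need the most care, though it is elementary: the whole argument hinges on the fact that ``no overlap'' forces both representations to be the same indicator partition.

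Now apply \Cref{lem:fixed-probability-transfer} with this pair \(i,j\) and \(\xi=m_i-m_j\), which is nonzero because \(M\) has distinct coordinates. For every sufficiently small \(\varepsilon>0\) this yields \(M^+\in\mathsf{MPC}_r(z,\rho)\) with \(\nu_z(M)\prec\nu_z(M^+)\). We also have \(\|M^+-M\|\le\varepsilon\|\xi\|(1/z_i+1/z_j)\), so \(M^+\) lies in any prescribed neighborhood of \(M\) once \(\varepsilon\) is small. Finally, distinctness of coordinates is an open condition, since finitely many inequalities \(m_g\neq m_h\) are preserved under small perturbation. Shrinking \(\varepsilon\) further therefore makes the coordinates of \(M'\coloneqq M^+\) distinct. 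This gives the required strict improvement, using the same message probabilities and the same number of messages.
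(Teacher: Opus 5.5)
Your proposal is correct and follows essentially the same route as the paper's proof. You write \(M\) as the midpoint of two distinct feasible vectors; the paper derives this from a general convex combination via the shift \(M\pm\delta(D-C)\), which you take as standard. You then average their implementations, obtain an overlap between two messages because the two rules differ on a positive-probability set, invoke \Cref{lem:fixed-probability-transfer} with \(\xi=m_i-m_j\) and small \(\varepsilon\), and use that distinctness of coordinates is an open condition. Your contrapositive argument that ``no overlap'' forces \(q=q'\) is simply a spelled-out version of the paper's remark that a vertex of \(\Delta_r\) cannot be the average of two distinct points.
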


A hierarchical power diagram summary is extreme by \Cref{it:semidiscrete-extreme-points} of \Cref{prop:semidiscrete-convex-order}, so \Cref{lem:extreme-unique-disintegration} makes its assignment of original posteriors to messages unique. I now show that every strict improvement requires another message and then show that one additional message always suffices.

\begin{lemma}\label{lem:hierarchical-rigidity}
Let \(\nu=\sum_{i=1}^rz_i\delta_{m_i}\) be induced by an \(r\)-cell hierarchical power diagram \((K_1,\ldots,K_r)\), and suppose the \(m_i\) are distinct. If a finite summary \(\lambda\) satisfies \(\nu\preceq\lambda\preceq\mu\) and \(\lambda\neq\nu\), then \(\left|\operatorname{supp}\lambda\right|\geq r+1\).
\end{lemma}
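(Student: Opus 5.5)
The plan is to compose the two mean-preserving contractions \(\nu\preceq\lambda\preceq\mu\) into a single representation of \(\nu\) as a summary of \(\mu\). I will then use the uniqueness and determinism of the extreme-point representation from \Cref{lem:extreme-unique-disintegration} to show that \(\lambda\) cannot have at most \(r\) atoms unless it equals \(\nu\). By \Cref{it:semidiscrete-extreme-points} of \Cref{prop:semidiscrete-convex-order}, the vector \(M=(m_1,\ldots,m_r)\) is an extreme point of \(\mathsf{MPC}_r(z,\mu)\), where \(z_i=\mu(K_i)>0\). Hence by \Cref{lem:extreme-unique-disintegration}, every family \((q_i)\) satisfying \eqref{eq:semidiscrete-density-representation} for \(M\) under \(\rho=\mu\) has \(q_i=\1_{K_i}\) \(\mu\)-a.e.

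Suppose, for contradiction, that \(\lambda=\sum_{j=1}^s w_j\delta_{a_j}\) with distinct \(a_j\), all \(w_j>0\), and \(s\le r\).

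\textbf{Representing \(\lambda\preceq\mu\).} Since \(\lambda\preceq\mu\), Strassen's theorem in the form \eqref{eq:semidiscrete-density-representation} gives measurable \(g_j\colon X\to[0,1]\) with \(\sum_j g_j=1\) \(\mu\)-a.e., \(\int g_j\,\mathrm d\mu=w_j\), and \(\int p\,g_j(p)\,\mu(\mathrm dp)=w_ja_j\).

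\textbf{Representing \(\nu\preceq\lambda\).} Since \(\nu\preceq\lambda\), the same theorem applied to the finite law \(\lambda\) gives laws \(\kappa_i\) on \(\{a_1,\ldots,a_s\}\) with \(\sum_i z_i\kappa_i=\lambda\) and barycenter of \(\kappa_i\) equal to \(m_i\). Set \(\pi(i\mid j)\coloneqq z_i\kappa_i(a_j)/w_j\), so that \(\sum_i\pi(i\mid j)=1\).

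\textbf{Composing.} Define \(q_i(p)\coloneqq\sum_j\pi(i\mid j)g_j(p)\). A direct check gives \(\sum_iq_i=1\) \(\mu\)-a.e. It also gives \(\int q_i\,\mathrm d\mu=\sum_j\pi(i\mid j)w_j=z_i\). Finally,
\[
\int p\,q_i(p)\,\mu(\mathrm dp)=\sum_j\pi(i\mid j)w_ja_j=z_i\sum_j\kappa_i(a_j)a_j=z_im_i .
\]
So \((q_i)\) satisfies \eqref{eq:semidiscrete-density-representation} for \(M\), and therefore \(q_i=\1_{K_i}\) \(\mu\)-a.e.

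\textbf{Each atom of \(\lambda\) feeds exactly one atom of \(\nu\).} Suppose some atom \(a_j\) had \(\pi(i\mid j)>0\) and \(\pi(i'\mid j)>0\) for distinct \(i,i'\). On the set \(\{g_j>0\}\), which has positive \(\mu\)-mass because \(w_j>0\), both \(q_i\) and \(q_{i'}\) would be positive. This contradicts \(q_i,q_{i'}\in\{0,1\}\) with \(q_i+q_{i'}\le1\).

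Hence there is a map \(f\colon\{1,\ldots,s\}\to\{1,\ldots,r\}\) with \(\pi(i\mid j)=\1\{f(j)=i\}\). Each \(\kappa_i\) is a probability measure with \(z_i>0\), so every \(i\) has some \(j\) with \(\kappa_i(a_j)>0\). That \(j\) satisfies \(f(j)=i\), so \(f\) is surjective. Combined with \(s\le r\), this makes \(f\) a bijection.

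\textbf{Conclusion.} With \(f\) bijective, \(\kappa_i=\delta_{a_{f^{-1}(i)}}\), so \(m_i=a_{f^{-1}(i)}\) and \(z_i=w_{f^{-1}(i)}\). Thus \(\lambda=\nu\), contradicting \(\lambda\neq\nu\). The distinctness of the \(m_i\) is what makes \(\nu\) have exactly \(r\) atoms of positive mass, so that the counting in the surjectivity step is correct.

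I expect the main obstacle to be the composition step. One must verify that the glued kernel is a legitimate representation in the sense of \eqref{eq:semidiscrete-density-representation}, and orient the backward kernel \(\kappa\) correctly into the forward weights \(\pi(i\mid j)\). Once that is in place, the rigidity supplied by \Cref{lem:extreme-unique-disintegration} does all the work.
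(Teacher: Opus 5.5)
Your proof is correct and follows essentially the same route as the paper. You compose the two Strassen representations into a representation of \(M\) under \(\mu\), use the uniqueness of the deterministic disintegration at an extreme point (\Cref{lem:extreme-unique-disintegration}) to force each atom of \(\lambda\) to feed a single atom of \(\nu\), and count; the only difference is cosmetic, since you work with densities \(g_j\) and a normalized kernel \(\pi(i\mid j)\) where the paper uses the measures \(\theta_j\) and the coupling matrix \(\pi_{ij}\).
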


\begin{lemma}\label{lem:one-more-message}
Every \(r\)-point summary \(\nu\) admits a finite summary \(\lambda\) such that
\[
\nu\prec\lambda\preceq\mu
\qquad \text{\&} \qquad
\left|\operatorname{supp}\lambda\right|\leq r+1.
\]
If \(\nu\) is a hierarchical power diagram summary, then \(\lambda\) may be chosen as a hierarchical power diagram summary and has exactly \(r+1\) support points.
\end{lemma}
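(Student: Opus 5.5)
Given an \(r\)-point summary \(\nu=\sum_{i=1}^r z_i\delta_{m_i}\preceq\mu\), I first fix a representation \((q_i)_{i=1}^r\) of \(\nu\) as in \eqref{eq:semidiscrete-density-representation} with \(\rho=\mu\), which Strassen's theorem provides. The plan is to pick one message \(i\) whose conditional law \(\eta_i=q_i\mu/z_i\) is not a point mass and split it. The idea is to reveal one binary distinction among the posteriors that message \(i\) pools. Such an \(i\) exists, because otherwise every \(\eta_i\) would be a Dirac mass at \(m_i\). In that case \(\mu=\sum_i z_i\eta_i\) would be finitely supported, which contradicts \Cref{ass:regular-posterior} since \(d\geq1\) and points lie on hyperplanes.

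For this \(\eta_i\), regularity of \(\mu\) gives \(\eta_i(L)=0\) for every proper hyperplane \(L\), because \(\eta_i\leq\mu/z_i\). Since \(\eta_i\) is not a point mass, there is a direction \(a\in H-H\) along which \(\langle a,\cdot\rangle\) is nonconstant \(\eta_i\)-a.s. Indeed, if it were constant for every \(a\), then \(\eta_i\) would be a point mass. I then choose a threshold \(c\) with both half-spaces \(\{\langle a,p\rangle>c\}\) and \(\{\langle a,p\rangle<c\}\) of positive \(\eta_i\)-mass, for instance the \(\eta_i\)-median of \(\langle a,p\rangle\); the boundary is \(\eta_i\)-null. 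Splitting \(\eta_i\) along this hyperplane gives two conditional barycenters \(m_i^+\neq m_i^-\), separated by the hyperplane, with weights \(z_i^\pm\) summing to \(z_i\) and weighted mean \(m_i\).

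Let \(\lambda\) replace \(z_i\delta_{m_i}\) by \(z_i^+\delta_{m_i^+}+z_i^-\delta_{m_i^-}\). Then \(\nu\preceq\lambda\), because \(\lambda\) is a mean-preserving spread of \(\nu\) at one atom. The inequality is strict: by Jensen's inequality with a strictly convex \(\varphi\) such as \(\|\cdot\|^2\), the integral rises by \(z_i^+z_i^-\|m_i^+-m_i^-\|^2/z_i>0\). Also \(\lambda\preceq\mu\), because \(\lambda\) is the barycenter summary of the decomposition \(\mu=\sum_{h\neq i}z_h\eta_h+z_i^+\eta_i^++z_i^-\eta_i^-\). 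Its support has at most \(r+1\) points, since \(m_i^\pm\) might coincide with some other \(m_h\).

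For the hierarchical case, by \Cref{lem:extreme-unique-disintegration} (together with \Cref{prop:semidiscrete-convex-order}) we have \(\eta_i=\mu|_{K_i}/z_i\). Choose \(i\) to be any cell; \(\mu|_{K_i}\) is not a point mass because \(K_i\) has nonempty interior and \(\mu\) is regular, although we need \(\mu(K_i)>0\), which is given by the definition. Refining \(K_i\) by the two-cell power diagram \(\{\langle a,p\rangle\geq c\},\{\langle a,p\rangle\leq c\}\), defined by the affine functions \(\langle a,\cdot\rangle-c\) and \(0\), which have distinct linear parts since \(a\neq0\), gives a hierarchical power diagram with \(r+1\) cells, both new ones with positive mass and hence nonempty interior. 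I expect the main obstacle to be ensuring that the barycenters are distinct, so that \(\lambda\) is a genuine hierarchical power diagram summary with exactly \(r+1\) points. The new points \(m_i^\pm\) differ from each other, but one might coincide with some \(m_h\), \(h\neq i\). I would handle this by varying the threshold \(c\) continuously. As \(c\) moves, \(m_i^+(c)\) traces a continuous curve that is injective because \(\langle a,m_i^+(c)\rangle\) is strictly monotone in \(c\), and likewise for \(m_i^-(c)\). So only finitely many values of \(c\) cause a collision with the finitely many \(m_h\), and a generic \(c\) avoids them all. Once the points are distinct, \(|\operatorname{supp}\lambda|=r+1\); this is also forced by \Cref{lem:hierarchical-rigidity}.
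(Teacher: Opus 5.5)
Your proposal is correct and essentially follows the paper's proof. You split the conditional law $\eta_i$ of one message along an $\eta_i$-null hyperplane and certify strictness with $\lVert\cdot\rVert^2$; in the hierarchical case you use \Cref{lem:extreme-unique-disintegration} to identify $\eta_i=\mu|_{K_i}/z_i$ and refine $K_i$ by a two-cell power diagram. The genericity-in-$c$ argument for distinct barycenters is unnecessary, and its strict-monotonicity claim fails when the law of $\langle a,p\rangle$ under $\eta_i$ has gaps. As you note, and as the paper does, \Cref{lem:hierarchical-rigidity} already forces $\lvert\operatorname{supp}\lambda\rvert=r+1$ for every admissible threshold $c$.
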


\Cref{lem:same-vocabulary-improvement,lem:hierarchical-rigidity,lem:one-more-message} distinguish hierarchical from nonhierarchical summaries by the smallest number of posterior values that permits a strict improvement. Let's name that number.

\begin{definition}\label{def:improvement-complexity}
For a finite summary \(\nu\), define its \textit{improvement complexity} by
\[
\kappa_\mu(\nu)\coloneqq
\min
\left\{
\left|\operatorname{supp}\lambda\right|\colon
\lambda\text{ is finite and }
\nu\prec\lambda\preceq\mu
\right\}.
\]
The set in this definition is nonempty by \Cref{lem:one-more-message}.
\end{definition}

Together, \Cref{lem:same-vocabulary-improvement,lem:hierarchical-rigidity,lem:one-more-message} yield a simple test for maximality. A summary using fewer than \(k\) posterior values can be improved while remaining within the limit. A \(k\)-point summary that is not generated by a hierarchical power diagram can be improved without adding a posterior value. A \(k\)-point hierarchical power diagram summary cannot be strictly improved unless a \((k+1)\)st posterior value is allowed.

\begin{theorem}\label{thm:maximal-summary-frontier}
Let \(k\geq1\) and \(\nu\in\mathcal S_k(\mu)\). The following statements are equivalent.
\begin{enumerate}[noitemsep,nosep]
\item The summary \(\nu\) is maximal in \(\mathcal S_k(\mu)\).
\item The law \(\nu\) has exactly \(k\) support points and is a hierarchical power diagram summary.
\item The improvement complexity satisfies \(\kappa_\mu(\nu)=k+1\).
\end{enumerate}
\end{theorem}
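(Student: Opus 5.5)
The plan is to prove the cycle (1)\(\Rightarrow\)(2)\(\Rightarrow\)(3)\(\Rightarrow\)(1), using the three lemmata together with \Cref{prop:semidiscrete-convex-order} applied with \(\rho=\mu\). Throughout, write a finite summary with \(r\) distinct support points as \(\nu=\nu_z(M)\) with \(z\in\Delta_r^\circ\) and \(M\) having distinct coordinates. By \Cref{it:semidiscrete-extreme-points}, such a \(\nu\) is a hierarchical power diagram summary exactly when \(M\) is an extreme point of \(\mathsf{MPC}_r(z,\mu)\). One bookkeeping point needs care here: the cells of the diagram must have masses \(z\) and distinct barycenters, and the ordering of coordinates is irrelevant.

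\emph{(1)\(\Rightarrow\)(2).} Suppose \(\nu\in\mathcal S_k(\mu)\) is maximal with \(r=|\operatorname{supp}\nu|\leq k\).

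First suppose \(r<k\). \Cref{lem:one-more-message} gives a finite \(\lambda\) with \(\nu\prec\lambda\preceq\mu\) and \(|\operatorname{supp}\lambda|\leq r+1\leq k\). Then \(\lambda\in\mathcal S_k(\mu)\), which contradicts maximality, so \(r=k\).

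Now suppose \(M\) is not extreme in \(\mathsf{MPC}_k(z,\mu)\). \Cref{lem:same-vocabulary-improvement} supplies \(M'\) with distinct coordinates and \(\nu\prec\nu_z(M')\in\mathcal S_k(\mu)\), again a contradiction. Hence \(M\) is extreme, and \(\nu\) is a hierarchical power diagram summary with exactly \(k\) points.

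\emph{(2)\(\Rightarrow\)(3).} If \(\nu\) is a \(k\)-point hierarchical power diagram summary, \Cref{lem:hierarchical-rigidity} shows that every finite \(\lambda\) with \(\nu\prec\lambda\preceq\mu\) has at least \(k+1\) support points. Strictness gives \(\lambda\neq\nu\); this step needs antisymmetry of \(\preceq\), since \(\nu\prec\lambda\) rules out \(\lambda\preceq\nu\). So \(\kappa_\mu(\nu)\geq k+1\). \Cref{lem:one-more-message} produces such a \(\lambda\) with exactly \(k+1\) points, so \(\kappa_\mu(\nu)=k+1\).

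\emph{(3)\(\Rightarrow\)(1).} If \(\kappa_\mu(\nu)=k+1\), then no finite \(\lambda\) with at most \(k\) support points strictly dominates \(\nu\) within the summaries of \(\mu\). Every element of \(\mathcal S_k(\mu)\) is finite by definition, so \(\nu\) is maximal.

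The implications themselves are essentially bookkeeping. I expect the main obstacle to be the translation in (1)\(\Rightarrow\)(2) between ``\(\nu\) is a hierarchical power diagram summary'' and ``\(M\) is an extreme point of \(\mathsf{MPC}_k(z,\mu)\).'' \Cref{prop:semidiscrete-convex-order} characterizes extreme points as barycenter vectors of hierarchical power diagrams with \(k\) cells and masses \(z\). I must check two things about the diagram obtained this way. First, its cells all have nonempty interior, so they have positive mass; this holds because \(z_i>0\). Second, its barycenters are distinct; this holds because they are the coordinates of \(M\), which are distinct since \(\nu\) has exactly \(k\) support points. Conversely, in (2)\(\Rightarrow\)(3) I must make sure that the \(k\)-point hypothesis guarantees distinct \(m_i\), as \Cref{lem:hierarchical-rigidity} requires. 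With these identifications in place, the cycle closes.
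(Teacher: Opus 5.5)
Your proof is correct and matches the paper's argument: (1)\(\Rightarrow\)(2) and (2)\(\Rightarrow\)(3) use the same lemmata in the same way. The one difference is that you close the cycle with (3)\(\Rightarrow\)(1) by unwinding the definition of \(\kappa_\mu\) (every element of \(\mathcal S_k(\mu)\) is finite with at most \(k\) atoms), whereas the paper proves (3)\(\Rightarrow\)(2) by re-invoking \Cref{lem:one-more-message} and \Cref{lem:same-vocabulary-improvement}; your route is slightly more economical. One minor correction: \(\lambda\neq\nu\) follows directly from the strict integral inequality in the definition of \(\nu\prec\lambda\), so you do not need antisymmetry of \(\preceq\).
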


How many bins do we need to strictly improve a given \(r\)-point summary?

\begin{corollary}\label{cor:vocabulary-efficiency}
Let \(\nu\) be an \(r\)-point summary. Then
\[
\nu\text{ is a hierarchical power diagram summary}
\quad\iff\quad
\kappa_\mu(\nu)=r+1.
\]
If \(\nu\) is not a hierarchical power diagram summary, then \(\kappa_\mu(\nu)\leq r\).
\end{corollary}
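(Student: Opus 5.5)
The corollary follows by combining \Cref{lem:hierarchical-rigidity}, \Cref{lem:one-more-message}, and \Cref{lem:same-vocabulary-improvement}. Throughout, I write \(\nu=\sum_{i=1}^r z_i\delta_{m_i}\) with \(z\in\Delta_r^\circ\) and distinct \(m_i\). The set of improvements in \Cref{def:improvement-complexity} is nonempty, so \(\kappa_\mu(\nu)\) is well defined, and by \Cref{lem:one-more-message} it always satisfies \(\kappa_\mu(\nu)\leq r+1\).

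\emph{Forward direction.} Suppose \(\nu\) is a hierarchical power diagram summary, induced by an \(r\)-cell hierarchical power diagram \((K_1,\ldots,K_r)\) with distinct barycenters. By \Cref{lem:hierarchical-rigidity}, every finite \(\lambda\) with \(\nu\prec\lambda\preceq\mu\) has \(\lambda\neq\nu\), and hence \(|\operatorname{supp}\lambda|\geq r+1\). This gives \(\kappa_\mu(\nu)\geq r+1\). Combined with the upper bound from \Cref{lem:one-more-message}, we obtain \(\kappa_\mu(\nu)=r+1\).

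\emph{Second claim, which also gives the reverse direction.} Suppose \(\nu\) is not a hierarchical power diagram summary. Consider the vector \(M=(m_1,\ldots,m_r)\). Since \(\nu\preceq\mu\), we have \(M\in\mathsf{MPC}_r(z,\mu)\), and \(M\) has distinct coordinates.

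\begin{itemize}[noitemsep,nosep]
\item Suppose \(M\) were an extreme point of \(\mathsf{MPC}_r(z,\mu)\). Then \Cref{it:semidiscrete-extreme-points} of \Cref{prop:semidiscrete-convex-order} (applied with \(\rho=\mu\), which is a regular reference law) would make \(M\) the barycenter vector of an \(r\)-cell hierarchical power diagram with masses \(z\). Because the coordinates of \(M\) are distinct, \(\nu\) would then be a hierarchical power diagram summary, contrary to assumption. So \(M\) is not extreme.
\item Since \(M\) is not extreme, \Cref{lem:same-vocabulary-improvement} supplies \(M'\in\mathsf{MPC}_r(z,\mu)\) with \(\nu\prec\nu_z(M')\). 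The law \(\nu_z(M')\) is finite and has at most \(r\) support points.
\end{itemize}

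Hence \(\kappa_\mu(\nu)\leq r\). Taking the contrapositive, \(\kappa_\mu(\nu)=r+1\) forces \(\nu\) to be a hierarchical power diagram summary, which completes the equivalence.

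The main point requiring care is the identification step in the first bullet. There I must check that "hierarchical power diagram summary" as defined in \Cref{def:hierarchical-power-diagram-summary} matches "barycenter vector of an \(r\)-cell hierarchical power diagram with masses \(z\)". Specifically, the cells must have positive mass (each \(z_i>0\)), and the barycenters must be distinct (which holds by hypothesis). The remaining steps are direct applications of the lemmas.
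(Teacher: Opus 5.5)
Your proof is correct and follows the paper's own argument: the forward direction pairs the lower bound from \Cref{lem:hierarchical-rigidity} with the upper bound from \Cref{lem:one-more-message}. The second claim uses \Cref{it:semidiscrete-extreme-points} of \Cref{prop:semidiscrete-convex-order} to show \(M\) is not extreme and then \Cref{lem:same-vocabulary-improvement} to get \(\kappa_\mu(\nu)\leq r\), and the reverse implication follows by contraposition exactly as you wrote it.
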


Notably, this means that we can improve a nonhierarchical \(r\)-point summary without adding a posterior value. A hierarchical power diagram summary cannot: every strict improvement requires at least \(r+1\) posterior values, and \(r+1\) always suffice.

For a finite action set \(A\) and payoff function \(u\colon A\times\Theta\to\mathbb R\), define the value function (in the posterior \(p\))
\[
W_u(p)\coloneqq
\max_{a\in A}
\sum_{\theta\in\Theta}p(\theta)u(a,\theta),
\]
and the value of a posterior law \(\eta\) by
\[
\mathcal V_u(\eta)\coloneqq\int_XW_u(p) \eta(\mathrm dp).
\]
Write \(x_+\coloneqq\max\{x,0\}\). Every finite-action decision problem has an optimal \(k\)-message summary that is maximal in \(\mathcal S_k(\mu)\):
\begin{corollary}\label{cor:optimal-k-message-summary}
Fix \(k\geq1\) and a finite-action decision problem \(u\). There is a summary \(\nu\in\mathcal S_k(\mu)\) maximizing \(\mathcal V_u\) over \(\mathcal S_k(\mu)\) that has exactly \(k\) support points and is a hierarchical power diagram summary.
\end{corollary}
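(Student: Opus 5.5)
We need to prove Corollary \ref{cor:optimal-k-message-summary}: existence of an optimal \(k\)-message summary for \(\mathcal V_u\) that is maximal, hence by \Cref{thm:maximal-summary-frontier} has exactly \(k\) support points and is a hierarchical power diagram summary. The plan has two parts: first show that the maximum of \(\mathcal V_u\) over \(\mathcal S_k(\mu)\) is attained, and then show that some maximizer can be chosen maximal in the convex order.

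For existence, I would write \(\mathcal S_k(\mu)\) as the union over \(r\le k\) of the images of the compact sets \(\{(z,M)\colon z\in\Delta_r,\ M\in X^r,\ \nu_z(M)\preceq\mu\}\), where \(\Delta_r\) is the closed simplex, so zero weights are allowed and the support may have fewer than \(k\) points. This set is closed. Indeed, \(\nu_z(M)\) depends weakly continuously on \((z,M)\), and the convex-order constraint is preserved under weak limits because each test function \(\varphi\) is continuous and bounded on the compact set \(X\). It is also bounded, since \(X\) is compact. Moreover, \(W_u\) is continuous (a maximum of finitely many affine functions), so \(\mathcal V_u(\nu_z(M))=\sum_i z_iW_u(m_i)\) is continuous in \((z,M)\). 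A maximizer \(\nu^\ast\) therefore exists.

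To upgrade to a maximal maximizer, I would use the fact that \(W_u\) is continuous and convex, so \(\nu\preceq\lambda\) implies \(\mathcal V_u(\nu)\le\mathcal V_u(\lambda)\). It then suffices to find a maximal \(\lambda\in\mathcal S_k(\mu)\) with \(\nu^\ast\preceq\lambda\). Consider the set \(U=\{\lambda\in\mathcal S_k(\mu)\colon \nu^\ast\preceq\lambda\}\), which is compact by the argument above, and maximize over \(U\) a strictly convex continuous functional such as \(\Phi(\lambda)=\int\|p\|^2\lambda(\mathrm dp)\). Let \(\lambda^\ast\) be a maximizer. If some \(\lambda'\in\mathcal S_k(\mu)\) satisfied \(\lambda^\ast\prec\lambda'\), then \(\lambda'\in U\) by transitivity. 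Also, \(\Phi(\lambda')>\Phi(\lambda^\ast)\), because two measures with \(\lambda^\ast\preceq\lambda'\) and equal second moments are equal.

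The main obstacle is justifying this last fact. The route I would take is to compare the martingale coupling \(Y=E[Y'\mid Y]\) supplied by Strassen's theorem. We have \(E\|Y'\|^2-E\|Y\|^2=E\|Y'-Y\|^2\), which is zero only if \(Y'=Y\) almost surely, i.e. only if \(\lambda'=\lambda^\ast\). This contradicts \(\lambda^\ast\prec\lambda'\), so \(\lambda^\ast\) is maximal in \(\mathcal S_k(\mu)\). It also satisfies \(\mathcal V_u(\lambda^\ast)\ge\mathcal V_u(\nu^\ast)\), so it is optimal. Finally, \Cref{thm:maximal-summary-frontier} gives that \(\lambda^\ast\) has exactly \(k\) support points and is a hierarchical power diagram summary.
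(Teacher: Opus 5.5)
Your proposal is correct and follows essentially the paper's argument: compactness of \(\mathcal S_k(\mu)\) gives an optimum, the second moment \(\int\lVert p\rVert^2\) serves as a tie-breaker that strictly increases under any strict convex-order improvement (by the Strassen identity \(\operatorname{E}\lVert T\rVert^2-\operatorname{E}\lVert S\rVert^2=\operatorname{E}\lVert T-S\rVert^2\)), and \Cref{thm:maximal-summary-frontier} then finishes. The one difference is where you maximize the second moment: the paper maximizes it over the set of all \(\mathcal V_u\)-maximizers, whereas you maximize it over \(\{\lambda\in\mathcal S_k(\mu)\colon\nu^\ast\preceq\lambda\}\); your version incidentally shows that every element of \(\mathcal S_k(\mu)\) is dominated by a maximal one.
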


\section{There is no ``best'' summary}

Whenever \(\nu\preceq\rho\) and \(\nu\neq\rho\), there is a two-action problem \(u\) for which \(\mathcal V_u(\rho)>\mathcal V_u(\nu)\). A two-point law \(\lambda_2\preceq\rho\) attains the same value as \(\rho\) in that problem. Note that the argument does not require regularity.

\begin{lemma}\label{lem:two-action-separation}
Suppose \(\nu,\rho\in\mathcal P(X)\), \(\nu\preceq\rho\), and \(\nu\neq\rho\). There exist a law \(\lambda_2\preceq\rho\) with exactly two support points and a two-action decision problem \(u\) such that \(\mathcal V_u(\lambda_2)=\mathcal V_u(\rho)>\mathcal V_u(\nu)\).
\end{lemma}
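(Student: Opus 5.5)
We argue by strict separation: the failure of \(\nu\succeq\rho\) produces a single convex test function, and we then reduce that test function to a two-action value function.

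\textbf{Step 1: a separating convex function.} Since \(\nu\preceq\rho\) and \(\nu\neq\rho\), the two laws share a barycenter and are not equal. Two laws on the compact set \(X\) that agree on every continuous convex function coincide. This holds because differences of continuous convex functions are uniformly dense in \(C(X)\), for instance by Stone--Weierstrass applied to the lattice they generate, or via polynomials written as differences of convex ones. Hence there is a continuous convex \(\varphi\) with \(\int\varphi\,\mathrm d\nu<\int\varphi\,\mathrm d\rho\).

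\textbf{Step 2: reduce to a single hinge.} I would reduce \(\varphi\) to one ``hinge'' \(p\mapsto(\langle a,p\rangle-c)_+\). The standard route is to approximate \(\varphi\) uniformly on \(X\) by a maximum of finitely many affine functions, using supporting hyperplanes at a dense finite set of points. Such a maximum is a positive combination of hinges plus an affine function. To see this, order the pieces and telescope: \(\max(\ell_1,\dots,\ell_n)\) is a nonnegative combination of \((\ell_i-\cdots)_+\) terms, or alternatively use the Choquet-type representation of convex functions by ridge functions. This is the step I expect to be the main obstacle. Writing a max of affine functions as a positive sum of single hinges is false in dimension \(\geq2\).

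So I would instead use a cleaner fact: the convex order \(\nu\preceq\rho\) with the same barycenter is equivalent to the ordering of all one-dimensional projections in the sense of hinge integrals. Concretely, if \(\int(\langle a,p\rangle-c)_+\,\nu(\mathrm dp)=\int(\langle a,p\rangle-c)_+\,\rho(\mathrm dp)\) for all \(a,c\), then every one-dimensional projection \(\langle a,\cdot\rangle\) has the same law under \(\nu\) and \(\rho\). This follows because two equal-mean laws on \(\mathbb R\) with equal call-price functions are equal, and the second derivative in \(c\) recovers the law. By Cramér--Wold, \(\nu=\rho\), a contradiction. Hence some hinge \(h(p)=(\langle a,p\rangle-c)_+\) has \(\int h\,\mathrm d\rho>\int h\,\mathrm d\nu\), the inequality \(\geq\) holding automatically by convex order. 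This avoids the approximation difficulty entirely.

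\textbf{Step 3: build the two-action problem and \(\lambda_2\).} The hinge \(h\) is \(W_u\) for two actions: action \(0\) pays \(0\) in every state, and action \(1\) pays \(u(1,\theta)\coloneqq a(\theta)-c\), extending \(a\) suitably so that \(\langle a,p\rangle-c=\sum_\theta p(\theta)u(1,\theta)\) on the simplex. Then \(\mathcal V_u(\rho)>\mathcal V_u(\nu)\). For \(\lambda_2\), split \(\rho\) along the half-spaces \(S=\{\langle a,p\rangle>c\}\) and \(S^c\), and collapse each piece to its barycenter. This coarsening satisfies \(\lambda_2\preceq\rho\). Because \(W_u\) is affine on each half-space, collapsing within it preserves the value, giving \(\mathcal V_u(\lambda_2)=\mathcal V_u(\rho)\).

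It remains to check that both pieces have positive mass and that their barycenters are distinct. If \(\rho(S)=0\), then \(\int h\,\mathrm d\rho=0\le\int h\,\mathrm d\nu\), contradicting strictness. If \(\rho(S^c)=0\), then \(h\) is affine \(\rho\)-a.s., so \(\int h\,\mathrm d\rho=h(m_\rho)\le\int h\,\mathrm d\nu\) by Jensen, again a contradiction. The two barycenters lie in the disjoint convex sets \(S\) and \(S^c\), so they are distinct. Thus \(\lambda_2\) has exactly two support points.
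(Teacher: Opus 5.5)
Your proof is correct, but it finds the separating hinge by a different route from the paper's. You argue by contraposition. If every hinge \(p\mapsto(\langle a,p\rangle-c)_+\) had the same integral under \(\nu\) and \(\rho\), then for each \(a\) the call-price function in \(c\) would coincide, hence so would the law of \(\langle a,\cdot\rangle\), and Cram\'er--Wold would force \(\nu=\rho\). The paper instead takes Strassen's martingale coupling \(P\sim\rho\), \(Q\sim\nu\) with \(\operatorname{E}[P\mid Q]=Q\), and picks a state \(\widehat\theta\) at which \(P\) and \(Q\) differ with positive probability. It then combines \(\operatorname{E}[Z^2]-\operatorname{E}[Y^2]=\operatorname{E}[(Z-Y)^2]>0\), for \(Z=P(\widehat\theta)\) and \(Y=Q(\widehat\theta)\), with \(x^2=2\int_0^1(x-c)_+\,\mathrm dc\) to extract a cutoff \(c\).

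Your route dispenses with the coupling, but it leans on characteristic functions and says nothing about the direction \(a\). The paper's route is more elementary and gives a sharper conclusion: the direction can be taken to be a coordinate, so the separating problem is a bet on a single state against a safe action. Coordinates suffice only because of the martingale structure; for laws that are not convex-ordered, agreement of all coordinate marginals would not imply equality.

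Your Step 3 is essentially the paper's construction of \(\lambda_2\): you split \(\rho\) along the half-space, use that \(W_u\) is affine on each side, and exclude degenerate splits via Jensen. Step 1 and the first half of Step 2 play no role in the final argument, and the latter asserts a claim you then retract, so drop them from a write-up.
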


I finish by showing that for \(k\geq2\) no greatest summary exists. In fact, no greatest summary exists even in the limited class of binary-action decision problems. That is, every summary is less valuable than some maximal summary in some two-action problem.

\begin{theorem}\label{thm:no-best-summary}
Fix \(k\geq2\). For every \(\nu\in\mathcal S_k(\mu)\), there exist a maximal summary \(\lambda\in\mathcal S_k(\mu)\) and a two-action decision problem \(u\) such that \(\mathcal V_u(\lambda)>\mathcal V_u(\nu)\). Consequently, \(\mathcal S_k(\mu)\) has no greatest element.
\end{theorem}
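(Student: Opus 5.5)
We prove the first claim, from which the ``consequently'' part follows. Fix \(\nu\in\mathcal S_k(\mu)\). The plan is to apply \Cref{lem:two-action-separation} with \(\rho=\mu\), and then upgrade the two-point law it produces to a maximal \(k\)-point summary without lowering its value in the chosen two-action problem.

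Since \(\nu\) has at most \(k\) support points while \(\mu\) has infinite support (by \Cref{ass:regular-posterior}, \(d\geq1\) and \(\mu\) charges no hyperplane, so it is not finitely supported), we have \(\nu\neq\mu\) and \(\nu\preceq\mu\). \Cref{lem:two-action-separation} therefore gives a two-point law \(\lambda_2\preceq\mu\) and a two-action problem \(u\) with \(\mathcal V_u(\lambda_2)=\mathcal V_u(\mu)>\mathcal V_u(\nu)\). Because \(k\geq2\), \(\lambda_2\in\mathcal S_k(\mu)\).

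Next we pass to a maximal element above \(\lambda_2\). Consider the set \(\{\lambda\in\mathcal S_k(\mu)\colon\lambda_2\preceq\lambda\}\). It is nonempty and weakly compact: laws with at most \(k\) atoms on the compact set \(X\) form a closed set, and the convex order is closed under weak convergence. Fix a strictly convex continuous function, say \(\varphi(p)=\|p\|^2\) on \(H\), and pick \(\lambda^\ast\) in this set maximizing \(\int\varphi\,\mathrm d\lambda\). Then \(\lambda^\ast\) is maximal in \(\mathcal S_k(\mu)\). Indeed, if \(\lambda^\ast\prec\lambda'\in\mathcal S_k(\mu)\), then \(\lambda'\) is still above \(\lambda_2\), and among laws with a common mean, \(\lambda^\ast\preceq\lambda'\) with \(\lambda^\ast\neq\lambda'\) forces \(\int\varphi\,\mathrm d\lambda'>\int\varphi\,\mathrm d\lambda^\ast\), contradicting the choice of \(\lambda^\ast\). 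By \Cref{thm:maximal-summary-frontier}, \(\lambda^\ast\) is a \(k\)-point hierarchical power diagram summary, though we only need maximality. Alternatively, one can avoid the compactness step by iterating \Cref{lem:one-more-message} from \(\lambda_2\) up to \(k\) points and invoking \Cref{thm:maximal-summary-frontier}. Either way, \(\lambda_2\preceq\lambda\preceq\mu\) for a maximal \(\lambda\).

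Since \(W_u\) is convex and continuous, monotonicity in the convex order gives
\[
\mathcal V_u(\mu)\geq\mathcal V_u(\lambda)\geq\mathcal V_u(\lambda_2)=\mathcal V_u(\mu)>\mathcal V_u(\nu),
\]
so \(\mathcal V_u(\lambda)>\mathcal V_u(\nu)\). For the consequence, suppose \(\nu\) were greatest. Then \(\lambda\preceq\nu\), hence \(\mathcal V_u(\lambda)\leq\mathcal V_u(\nu)\), a contradiction.

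The main obstacle is the existence of a maximal element above \(\lambda_2\). It requires closedness of \(\mathcal S_k(\mu)\) under weak limits, including the possible merging of atoms, which keeps the support at most \(k\). It also requires the fact that strict convex-order dominance strictly increases \(\int\|p\|^2\), which holds because equality for a strictly convex function together with \(\preceq\) forces equality of the laws.
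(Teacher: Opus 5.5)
Your proof is correct and has the same architecture as the paper's: \Cref{lem:two-action-separation} with \(\rho=\mu\), then a maximal \(\lambda\in\mathcal S_k(\mu)\) with \(\mathcal V_u(\lambda)\geq\mathcal V_u(\lambda_2)\), then the sandwich \(\mathcal V_u(\mu)\geq\mathcal V_u(\lambda)\geq\mathcal V_u(\lambda_2)=\mathcal V_u(\mu)\). The only difference is how you pick \(\lambda\): you maximize \(\int\lVert p\rVert^2\) over the convex-order upper set of \(\lambda_2\), whereas the paper takes the \(\mathcal V_u\)-optimizer from \Cref{cor:optimal-k-message-summary}, which rests on the same compactness-plus-strictly-convex tie-break (your version does not even need \Cref{thm:maximal-summary-frontier}), and your side remark about iterating \Cref{lem:one-more-message} is also valid. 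It works only because \(\lambda_2\) is itself a two-cell power diagram summary, namely the cut of \(H\) by \(\{p\colon p(\widehat\theta)=c\}\), so the lemma's second clause applies at every step and adds exactly one support point each time.
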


\bibliography{sample}

\appendix

\section[appendix]{Omitted Proofs}

\subsection{Proof of \texorpdfstring{\Cref{prop:semidiscrete-convex-order}}{prop}}
\begin{proof}[Proof of \Cref{prop:semidiscrete-convex-order}]
Let \(E\coloneqq H-m_\rho\) be the translation space of \(H\), and use the Euclidean structure on \(H\) to identify \(E\) with its dual.\footnote{The translation space \(E=H-m_\rho\) is the linear space of directions parallel to the affine space \(H\). Write \(\langle\cdot,\cdot\rangle\) for its Euclidean inner product, which associates each \(a\in E\) with the linear functional \(x\mapsto\langle a,x\rangle\); in finite dimensions, every linear functional on \(E\) arises uniquely in this way.}

Suppose that \(M^n=(m_1^n,\ldots,m_r^n)\in\mathsf{MPC}_r(z,\rho)\) and \(M^n\to M=(m_1,\ldots,m_r)\) in \(X^r\). For every continuous convex function \(\varphi\colon X\to\mathbb R\),
\[
\sum_{i=1}^rz_i\varphi(m_i)
=
\lim_{n\to\infty}\sum_{i=1}^rz_i\varphi(m_i^n)
\leq
\int_X\varphi(p)\rho(\mathrm dp).
\]
Hence, \(M\in\mathsf{MPC}_r(z,\rho)\), so \(\mathsf{MPC}_r(z,\rho)\) is closed in the compact set \(X^r\), \textit{ergo} compact. Convexity follows from \eqref{eq:semidiscrete-density-representation}.\footnote{If \(M\) and \(N\) are implemented by \(q\) and \(\widetilde q\), respectively, then \(tq+(1-t)\widetilde q\) implements \(tM+(1-t)N\) for every \(t\in[0,1]\).}

\(\nu_z(M)\preceq\rho\) implies
\(\sum_{i=1}^rz_i\ell(m_i)
=
\int_X\ell(p)\,\rho(\mathrm dp)\) for every affine \(\ell\). Since affine functionals separate points of \(H\), this equality implies
\[
\sum_{i=1}^rz_im_i=m_\rho,
\tag{3}\label{eq:semidiscrete-barycenter-constraint}
\]
and so \(\mathsf{MPC}_r(z,\rho)\) is contained in the affine space in \Cref{it:semidiscrete-affine-hull}.

For the reverse inclusion, define \(\Gamma\colon E\to E\) by \(\Gamma\xi\coloneqq\int_X\left\langle\xi,p-m_\rho\right\rangle\left(p-m_\rho\right) \rho(\mathrm dp)\). Since \(\langle\xi,\Gamma\xi\rangle=\int_X\langle\xi,p-m_\rho\rangle^2\,\rho(\mathrm dp)\), \(\langle\xi,\Gamma\xi\rangle=0\) implies that the continuous function \(p\mapsto\langle\xi,p-m_\rho\rangle\) vanishes \(\rho\)-almost everywhere and, therefore, on \(\operatorname{supp}\rho\). \(\operatorname{aff}(\operatorname{supp}\rho)=H\) then implies \(\xi=0\), so \(\Gamma\) is invertible.

Let \(D=(d_1,\ldots,d_r)\in E^r\) satisfy \(\sum_{i=1}^rz_id_i=0\). Since \(X\) is compact and \(\Gamma^{-1}\) is continuous, there is \(\delta>0\) such that
\[\max_{1\leq i\leq r}\lVert d_i\rVert<\delta \qquad \Longrightarrow \qquad 
\left|\left\langle\Gamma^{-1}d_i,p-m_\rho\right\rangle\right|\leq1 \quad \forall i, \ \forall p\in X.
\]
For every such \(D\), the functions
\[
q_i^D(p)\coloneqq z_i\left(1+\left\langle\Gamma^{-1}d_i,p-m_\rho\right\rangle\right)
\]
are nonnegative and satisfy
\[
\sum_{i=1}^rq_i^D=1,
\qquad
\int_Xq_i^D(p) \rho(\mathrm dp)=z_i,
\qquad \text{\&} \qquad
\frac{1}{z_i}\int_Xp q_i^D(p) \rho(\mathrm dp)=m_\rho+d_i.
\]
Hence, \(\mathsf{MPC}_r(z,\rho)\) contains a relative neighborhood of \((m_\rho,\ldots,m_\rho)\) in the affine space defined by \eqref{eq:semidiscrete-barycenter-constraint}. We conclude \Cref{it:semidiscrete-affine-hull}.

Let \(M^0\coloneqq(m_\rho,\ldots,m_\rho)\). Translating the affine hull by \(M^0\) identifies it with the vector space
\[
V\coloneqq
\left\{
D=(d_1,\ldots,d_r)\in E^r\colon
\sum_{i=1}^rz_id_i=0
\right\}.
\]
Every linear functional on \(V\) extends to a linear functional on \(E^r\). Since every \(z_i\) is positive, its coefficients may be written as \(z_ia_i\) for vectors \(a_i\in E\). Thus, every linear functional on the affine hull, after taking \(M^0\) as zero, can be written as
\[
\Lambda_a(M)\coloneqq\sum_{i=1}^rz_i\left\langle a_i,m_i-m_\rho\right\rangle,
\qquad \text{for } a_i\in E.
\tag{4}\label{eq:semidiscrete-exposing-functional}
\]
The face exposed by \(\Lambda_a\) is the set of vectors in \(\mathsf{MPC}_r(z,\rho)\) that maximize \(\Lambda_a\).

Fix \(a_1,\ldots,a_r\in E\). Let \(I_1,\ldots,I_s\) partition \(\{1,\ldots,r\}\) according to the distinct values among these vectors, let \(a^h\) be the common value on \(I_h\), and set
\[
w_h\coloneqq\sum_{i\in I_h}z_i
\qquad \text{\&} \qquad
z^h\coloneqq\left(\frac{z_i}{w_h}\right)_{i\in I_h}.
\]
Define \(u_h(p)\coloneqq\langle a^h,p-m_\rho\rangle\), and write \(F_a\) for this face.

\begin{claim}\label{clm:semidiscrete-prescribed-mass-diagram}
There is \(c^*=(c_1^*,\ldots,c_s^*)\in\mathbb R^s\) such that the affine scores \(p\mapsto u_h(p)+c_h^*\), for \(h=1,\ldots,s\), generate an \(s\)-cell power diagram \((G_1,\ldots,G_s)\) satisfying \(\rho(G_h)=w_h\) for every \(h\).
\end{claim}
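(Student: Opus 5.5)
Given distinct vectors \(a^1,\ldots,a^s\in E\) and prescribed masses \(w\in\Delta_s^\circ\), I need to find offsets \(c^*\) so that each cell \(G_h(c)\coloneqq\{p\colon u_h(p)+c_h\geq u_{h'}(p)+c_{h'}\ \forall h'\}\) has \(\rho\)-mass \(w_h\). This is the standard semidiscrete optimal-transport (Aurenhammer) problem, and I will solve it variationally. Consider the convex function
\[
\Phi(c)\coloneqq\int_X\max_{1\leq h\leq s}\bigl(u_h(p)+c_h\bigr)\,\rho(\mathrm dp)-\sum_{h=1}^sw_hc_h .
\]
\(\Phi\) is convex as an integral of maxima of affine functions of \(c\) minus a linear term. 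It is also invariant under \(c\mapsto c+t\mathbf 1\), because \(\sum_hw_h=1\).

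The first step is differentiability. For \(h\neq h'\) the tie set \(\{u_h+c_h=u_{h'}+c_{h'}\}\) is \(\{p\colon\langle a^h-a^{h'},p-m_\rho\rangle=c_{h'}-c_h\}\). Since \(a^h\neq a^{h'}\), this is a proper affine hyperplane of \(H\), so it is \(\rho\)-null by regularity. Hence for \(\rho\)-a.e.\ \(p\) the maximizing index is unique. By dominated convergence (\(X\) is compact and everything is bounded), \(\Phi\) is differentiable with \(\partial_{c_h}\Phi(c)=\rho(G_h(c))-w_h\). In fact it is \(C^1\): the map \(c\mapsto\rho(G_h(c))\) is continuous, again because the boundaries are null.

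The second step, which I expect to be the main obstacle, is showing that a minimizer exists. By translation invariance I restrict to \(\{c\colon\sum_h c_h=0\}\) and prove coercivity there. Let \(B\coloneqq\max_h\sup_X|u_h|<\infty\). Then \(\Phi(c)\geq\max_hc_h-B-\sum_hw_hc_h\). Since every \(w_h>0\), \(\sum_hw_hc_h\) is a strict convex combination of the \(c_h\). So \(\max_hc_h-\sum_hw_hc_h\geq(\min_h w_h)\,(\max_hc_h-\min_hc_h)\), which tends to infinity as \(\lVert c\rVert\to\infty\) on that hyperplane. A continuous coercive convex function attains its minimum at some \(c^*\), and the first-order condition gives \(\rho(G_h(c^*))=w_h\) for every \(h\).

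Finally, I need to check that the result is a genuine \(s\)-cell power diagram in the sense of \Cref{def:power-diagram-summary}. The linear parts \(a^h\) are distinct by construction of the grouping, so the definition applies. Each \(G_h\) has positive mass \(w_h>0\). A cell with empty interior in \(H\) would lie in the boundary of a polyhedron, which is contained in finitely many hyperplanes and so has \(\rho\)-mass zero. Hence every cell has nonempty interior, and \((G_1,\ldots,G_s)\) is an \(s\)-cell power diagram with the prescribed masses. (If \(s=1\) the claim is trivial: \(G_1=H\).)
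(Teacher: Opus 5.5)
Your proposal is correct and follows essentially the same route as the paper: minimize the convex, translation-invariant potential \(\int_X\max_h\{u_h(p)+c_h\}\,\rho(\mathrm dp)-\sum_h w_hc_h\) using coercivity on a normalizing slice, differentiate under the integral because the tie sets are \(\rho\)-null, and rule out empty interiors by regularity. The differences are cosmetic. You normalize by \(\sum_h c_h=0\) and bound the integral using \(\sup_X|u_h|\), whereas the paper normalizes by \(\max_h c_h=0\) and uses \(\int_X u_j\,\mathrm d\rho=0\); your extra \(C^1\) claim is true but not needed.
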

\begin{proof}[Proof of \Cref{clm:semidiscrete-prescribed-mass-diagram}]
Define
\[
\Psi(c)\coloneqq
\int_X\max_h\left\{u_h(p)+c_h\right\}\,\rho(\mathrm dp)
-\sum_{h=1}^sw_hc_h,
\qquad\text{for }c\in\mathbb R^s.
\]
If \(s=1\), take \(c_1^*=0\) and \(G_1=H\). Suppose henceforth that \(s\geq2\). The function \(\Psi\) is continuous, convex, and invariant under common translations of \(c\).\footnote{For every \(t\in\mathbb R\), adding \(t\) to every coordinate of \(c\) adds \(t\) to the pointwise maximum and subtracts \(t\sum_{h=1}^sw_h=t\) in the second term. Hence, \(\Psi(c+t\mathbf 1)=\Psi(c)\), where \(\mathbf 1=(1,\ldots,1)\).} Let \(W\coloneqq\{c\in\mathbb R^s\colon\max_hc_h=0\}\). For every \(c\in W\), choose \(j\) with \(c_j=0\). Since \(\int_Xu_j(p)\,\rho(\mathrm dp)=0\),
\(\Psi(c)\geq-\sum_{h=1}^sw_hc_h\).

Every coordinate of \(c\) is nonpositive. Consequently, \(\lVert c\rVert\to\infty\) within \(W\) implies \(-\sum_{h=1}^sw_hc_h\to\infty\), and so \(\Psi(c)\to\infty\). Since \(\Psi\) is continuous and coercive on the closed set \(W\), it has a minimizer \(c^*\) on \(W\). Every vector in \(\mathbb R^s\) differs by a common translation from a vector in \(W\), so translation invariance makes \(c^*\) a global minimizer.

Because the \(a^h\) are distinct, every set of ties
\[
\left\{p\in H\colon u_h(p)+c_h^*=u_g(p)+c_g^*\right\},
\qquad\text{with }h\neq g,
\]
is a proper affine hyperplane and, therefore, \(\rho\)-null. For \(h=1,\ldots,s\), set
\[
G_h\coloneqq
\left\{
p\in H\colon
u_h(p)+c_h^*\geq u_g(p)+c_g^*
\text{ for every }g
\right\}.
\]
Fix \(h\). For \(t\neq0\), let
\[
D_{h,t}(p)\coloneqq
\frac{\max_g\{u_g(p)+c_g^*+t\1_{\{g=h\}}\}-\max_g\{u_g(p)+c_g^*\}}{t}.
\]
Only the \(h\)th term inside the maximum changes, and it changes by \(t\). The numerator, therefore, has absolute value at most \(\lvert t\rvert\), so \(\lvert D_{h,t}(p)\rvert\leq1\). Outside the null tie sets, exactly one index attains the maximum. For all sufficiently small \(t\), the same index continues to attain the maximum. Hence, \(D_{h,t}(p)\to\1_{G_h}(p)\) as \(t\to0\). The dominated convergence theorem, together with the term \(-w_hc_h\) in the definition of \(\Psi\), implies
\[
\left.
\frac{\mathrm d}{\mathrm dt}
\Psi(c_1^*,\ldots,c_h^*+t,\ldots,c_s^*)
\right|_{t=0}
=
\rho(G_h)-w_h.
\]
Since \(c^*\) is a global minimizer of \(\Psi\), the function \(t\mapsto\Psi(c_1^*,\ldots,c_h^*+t,\ldots,c_s^*)\) is minimized at \(t=0\), and so its derivative at zero is zero. Hence, \(\rho(G_h)=w_h\). Since \(h\) was arbitrary, this equality holds for every \(h\). Each \(G_h\) is an intersection of affine halfspaces and has positive \(\rho\)-mass. If it had empty interior in \(H\), it would be contained in a proper affine hyperplane and would have zero \(\rho\)-mass. Thus, \((G_1,\ldots,G_s)\) is an \(s\)-cell power diagram. \end{proof}

Fix any \(c=(c_1,\ldots,c_s)\in\mathbb R^s\) such that the affine scores \(p\mapsto u_h(p)+c_h\), for \(h=1,\ldots,s\), generate an \(s\)-cell power diagram \((G_1,\ldots,G_s)\) satisfying \(\rho(G_h)=w_h\) for every \(h\); such a vector exists by \Cref{clm:semidiscrete-prescribed-mass-diagram}. For \(h=1,\ldots,s\), let \(\rho^h\coloneqq\rho|_{G_h}/w_h\) denote the conditional law under \(\rho\) given \(G_h\).

\begin{claim}\label{clm:semidiscrete-face-disintegrations}
A vector \(M\in\mathsf{MPC}_r(z,\rho)\) belongs to \(F_a\) if and only if there is a representation \eqref{eq:semidiscrete-disintegration} satisfying
\[
\sum_{i\in I_h}\frac{z_i}{w_h}\eta_i=\rho^h,
\qquad\text{for }h=1,\ldots,s.
\tag{5}\label{eq:semidiscrete-local-disintegrations}
\]
If \(M\in F_a\), then every representation \eqref{eq:semidiscrete-disintegration} satisfies \eqref{eq:semidiscrete-local-disintegrations}.
\end{claim}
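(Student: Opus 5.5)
The plan is to compute the maximum of \(\Lambda_a\) over \(\mathsf{MPC}_r(z,\rho)\) through the density representation \eqref{eq:semidiscrete-density-representation}, and then read off the equality case. Given \(M\in\mathsf{MPC}_r(z,\rho)\) with representation \((q_i)\) (equivalently \(\eta_i\), with \(z_i\eta_i=q_i\rho\)), write
\[
\Lambda_a(M)=\sum_{i=1}^rz_i\langle a_i,m_i-m_\rho\rangle=\int_X\sum_{h=1}^s u_h(p)\,Q_h(p)\,\rho(\mathrm dp),
\qquad Q_h\coloneqq\sum_{i\in I_h}q_i .
\]
The aggregated functions satisfy \(Q_h\geq0\), \(\sum_hQ_h=1\) \(\rho\)-a.e., and \(\int Q_h\,\mathrm d\rho=w_h\). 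Thus \(\Lambda_a\) depends on the representation only through the group assignment \((Q_h)\).

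\textbf{Upper bound.} Using the constants \(c\) fixed just before the claim, I would write
\[
\int\sum_hu_hQ_h\,\mathrm d\rho=\int\sum_h(u_h+c_h)Q_h\,\mathrm d\rho-\sum_hw_hc_h\leq\int\max_h\{u_h+c_h\}\,\mathrm d\rho-\sum_hw_hc_h\eqqcolon\Lambda^*.
\]
This quantity does not depend on \(M\). Equality holds exactly when, \(\rho\)-a.e., \(Q_h\) puts all its weight on indices attaining the maximum. Since ties are \(\rho\)-null (the \(a^h\) are distinct and \(\rho\) is regular), this means \(Q_h=\1_{G_h}\) \(\rho\)-a.e.

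\textbf{Attainment.} I need \(\Lambda^*\) to be achieved within \(\mathsf{MPC}_r(z,\rho)\). Take \(q_i\coloneqq(z_i/w_h)\1_{G_h}\) for \(i\in I_h\). Then \(\rho(G_h)=w_h\) gives \(\int q_i\,\mathrm d\rho=z_i\), and \(\sum_iq_i=\sum_h\1_{G_h}=1\) a.e., so this is a feasible representation with \(Q_h=\1_{G_h}\). Hence \(\max\Lambda_a=\Lambda^*\), and \(F_a\) consists of exactly those \(M\) whose representations attain \(\Lambda^*\).

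\textbf{Both directions and the "every representation" clause.} Since \(\Lambda_a(M)\) is the same for all representations of \(M\), membership \(M\in F_a\) means every representation attains \(\Lambda^*\). By the equality case, every representation then has \(Q_h=\1_{G_h}\) a.e. This is precisely \eqref{eq:semidiscrete-local-disintegrations}, because \(\sum_{i\in I_h}z_i\eta_i=Q_h\rho=\rho|_{G_h}=w_h\rho^h\). Conversely, if some representation satisfies \eqref{eq:semidiscrete-local-disintegrations}, then \(Q_h=\1_{G_h}\), so it attains \(\Lambda^*\) and \(M\in F_a\).

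The one delicate step is justifying that every \(\eta_i\) has a density \(q_i\leq1/z_i\) with respect to \(\rho\). This follows from \(z_i\eta_i\leq\rho\) via Radon–Nikodym, as the paper already notes. A second point to check is the translation from \(\sum_{i\in I_h}z_i\eta_i\) to \(Q_h\rho\). Beyond these, the argument reduces to the pointwise inequality \(\sum_hx_hQ_h\leq\max_hx_h\) for a probability vector \((Q_h)\), with equality iff \(Q\) is supported on the argmax.
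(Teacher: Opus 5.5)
Your proposal is correct and follows the paper's proof essentially step for step. Both aggregate the densities into group weights $Q_h=\sum_{i\in I_h}q_i$ and bound $\Lambda_a(M)$ by $\int_X\max_h\{u_h+c_h\}\,\rho(\mathrm dp)-\sum_h w_hc_h$; both attain the bound with $q_i=(z_i/w_h)\1_{G_h}$, and both use the $\rho$-nullity of ties to show that equality holds exactly when $Q_h=\1_{G_h}$ $\rho$-a.e., which is equivalent to \eqref{eq:semidiscrete-local-disintegrations}.
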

\begin{proof}[Proof of \Cref{clm:semidiscrete-face-disintegrations}]
Fix \(M\in\mathsf{MPC}_r(z,\rho)\), let \((q_i)_i\) satisfy \eqref{eq:semidiscrete-density-representation}, and set \(\widehat q_h\coloneqq\sum_{i\in I_h}q_i\). Then
\[
\Lambda_a(M)=\int_X\sum_{h=1}^s\widehat q_h(p)u_h(p)\,\rho(\mathrm dp)\leq\int_X\max_h\left\{u_h(p)+c_h\right\}\,\rho(\mathrm dp)-\sum_{h=1}^sw_hc_h.
\tag{6}\label{eq:semidiscrete-face-bound}
\]
The inequality follows because \((\widehat q_h(p))_{h=1}^s\) is a probability vector for \(\rho\)-almost every \(p\), so \(\sum_{h=1}^s\widehat q_h(p)(u_h(p)+c_h)\leq\max_h\{u_h(p)+c_h\}\), while \(\int_X\widehat q_h(p)\,\rho(\mathrm dp)=w_h\). Up to the null cell boundaries, the functions \(p\mapsto(z_i/w_h)\1_{G_h}(p)\), for \(i\in I_h\), satisfy \eqref{eq:semidiscrete-density-representation} and implement a vector that attains the bound. Outside the tie sets, there is a unique \(h\) such that \(u_h(p)+c_h=\max_g\{u_g(p)+c_g\}\). Hence, equality in \eqref{eq:semidiscrete-face-bound} holds if and only if
\[
\sum_{i\in I_h}q_i=\1_{G_h}
\quad\rho\text{-almost everywhere},
\qquad\text{for }h=1,\ldots,s.
\tag{7}\label{eq:semidiscrete-face-equality}
\]
If one representation of \(M\) satisfies \eqref{eq:semidiscrete-face-equality}, then \(M\in F_a\). If \(M\in F_a\), every representation of \(M\) must attain the bound in \eqref{eq:semidiscrete-face-bound} and, therefore, satisfy \eqref{eq:semidiscrete-face-equality}. Since \(q_i(p)\rho(\mathrm dp)=z_i\eta_i(\mathrm dp)\), \eqref{eq:semidiscrete-face-equality} is equivalent to \eqref{eq:semidiscrete-local-disintegrations}.\end{proof}

\begin{claim}\label{clm:semidiscrete-direct-face}
The coordinate-grouping map \(M\mapsto\left((m_i)_{i\in I_1},\ldots,(m_i)_{i\in I_s}\right)\) is an affine bijection from \(F_a\) onto
\[
\prod_{h=1}^s
\mathsf{MPC}_{\lvert I_h\rvert}(z^h,\rho^h).
\tag{8}\label{eq:semidiscrete-face-product}
\]
\end{claim}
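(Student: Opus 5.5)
The plan is to check that the coordinate-grouping map is well defined into the product \eqref{eq:semidiscrete-face-product}, that it is injective and affine, and that it is onto. Everything goes through \Cref{clm:semidiscrete-face-disintegrations}, which says that membership in \(F_a\) is the same as having a representation \eqref{eq:semidiscrete-disintegration} that splits as in \eqref{eq:semidiscrete-local-disintegrations}.

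\textbf{The map lands in the product.} Let \(M\in F_a\). By \Cref{clm:semidiscrete-face-disintegrations}, every representation \((\eta_i)_{i=1}^r\) of \(M\) satisfies \(\sum_{i\in I_h}(z_i/w_h)\eta_i=\rho^h\) for each \(h\). Each \(\eta_i\) has barycenter \(m_i\), and \((z_i/w_h)_{i\in I_h}=z^h\in\Delta_{\lvert I_h\rvert}^\circ\). So the family \((\eta_i)_{i\in I_h}\) is a representation of type \eqref{eq:semidiscrete-disintegration} for \(\rho^h\) with weights \(z^h\). By Strassen's theorem, \((m_i)_{i\in I_h}\in\mathsf{MPC}_{\lvert I_h\rvert}(z^h,\rho^h)\).

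One technical point is that \(\mathsf{MPC}\) was defined inside \(X^r\), while \(\rho^h\) lives on \(G_h\cap X\). Barycenters of measures on \(X\) stay in \(X\), and \(\mathsf{MPC}_{\lvert I_h\rvert}(z^h,\rho^h)\) only uses \(\rho^h\) as a measure on \(X\), so nothing breaks.

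\textbf{Injective and affine.} The map only permutes coordinates, so both properties are immediate.

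\textbf{Onto.} Take \(N^h=(n_i)_{i\in I_h}\in\mathsf{MPC}_{\lvert I_h\rvert}(z^h,\rho^h)\) for each \(h\). Since the \(I_h\) partition \(\{1,\ldots,r\}\), these blocks assemble into a single vector \(M=(n_1,\ldots,n_r)\).
\begin{enumerate}[noitemsep,nosep]
\item By Strassen's theorem, each \(N^h\) has measures \((\eta_i)_{i\in I_h}\) with \(\sum_{i\in I_h}(z_i/w_h)\eta_i=\rho^h\) and \(m_{\eta_i}=n_i\).
\item Summing over \(h\) with weights \(w_h\) gives
\[
\sum_{i=1}^r z_i\eta_i=\sum_{h=1}^s w_h\rho^h=\sum_{h=1}^s\rho|_{G_h}=\rho .
\]
The last equality holds because the cells \(G_h\) cover \(H\) and overlap only on tie sets, which are \(\rho\)-null proper hyperplanes by regularity.
\item Hence \((\eta_i)_{i=1}^r\) is a representation \eqref{eq:semidiscrete-disintegration} of \(M\), so \(M\in\mathsf{MPC}_r(z,\rho)\).
\item This representation satisfies \eqref{eq:semidiscrete-local-disintegrations} by construction, so \Cref{clm:semidiscrete-face-disintegrations} gives \(M\in F_a\).
\end{enumerate}
The image of \(M\) is \((N^1,\ldots,N^s)\), so the map is onto.

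\textbf{Main obstacle.} The only delicate step is \(\sum_h w_h\rho^h=\rho\), which needs the null-boundary fact from regularity. Everything else is bookkeeping through \Cref{clm:semidiscrete-face-disintegrations}.
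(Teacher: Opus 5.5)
Your proposal is correct and follows the paper's proof essentially step for step. You obtain the forward inclusion from \Cref{clm:semidiscrete-face-disintegrations}, and you get surjectivity by concatenating block representations and using that the cells $G_h$ overlap only on $\rho$-null tie sets, so that $\sum_h w_h\rho^h=\rho$. Injectivity and affinity hold, as in the paper, because the map only regroups coordinates.
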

\begin{proof}[Proof of \Cref{clm:semidiscrete-direct-face}]
By \Cref{clm:semidiscrete-face-disintegrations} and \eqref{eq:semidiscrete-disintegration}, if \(M\in F_a\), then
\((m_i)_{i\in I_h}\in\mathsf{MPC}_{\lvert I_h\rvert}(z^h,\rho^h)\) for every \(h\).

Conversely, suppose that \((m_i)_{i\in I_h}\in\mathsf{MPC}_{\lvert I_h\rvert}(z^h,\rho^h)\) for every \(h\). By \eqref{eq:semidiscrete-disintegration}, for each \(h\) there are probability measures \((\eta_i)_{i\in I_h}\) satisfying
\[
\sum_{i\in I_h}\frac{z_i}{w_h}\eta_i=\rho^h
\qquad \text{\&} \qquad
m_{\eta_i}=m_i
\quad\text{for every }i\in I_h.
\]
Since \(w_h\rho^h=\rho|_{G_h}\) and the cells \(G_1,\ldots,G_s\) cover \(H\) up to \(\rho\)-null sets,
\[
\sum_{i=1}^rz_i\eta_i
=
\sum_{h=1}^sw_h\rho^h
=
\rho.
\]
Together with \(m_{\eta_i}=m_i\), \eqref{eq:semidiscrete-disintegration} implies that \(M\in\mathsf{MPC}_r(z,\rho)\). The constructed representation satisfies \eqref{eq:semidiscrete-local-disintegrations}, so \Cref{clm:semidiscrete-face-disintegrations} implies that \(M\in F_a\). The coordinate-grouping map is onto. It is one-to-one because the grouped blocks contain every coordinate of \(M\), and it is affine because it only regroups those coordinates.\end{proof}

\begin{claim}\label{clm:semidiscrete-recursive-extremality}
Each \(\rho^h\) is a regular reference law. Moreover, a vector \(M\in F_a\) is extreme in \(\mathsf{MPC}_r(z,\rho)\) if and only if \((m_i)_{i\in I_h}\) is extreme in \(\mathsf{MPC}_{\lvert I_h\rvert}(z^h,\rho^h)\) for every \(h\).
\end{claim}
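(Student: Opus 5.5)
The plan has two parts. First I show that each \(\rho^h\) is regular. Then I transfer extremality through the affine bijection of \Cref{clm:semidiscrete-direct-face}, using the fact that \(F_a\) is a face of \(\mathsf{MPC}_r(z,\rho)\).

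\textbf{Regularity of each \(\rho^h\).} Fix \(h\) and a proper affine hyperplane \(L\) of \(H\). Then
\[
\rho^h(L)=\rho(L\cap G_h)/w_h\leq\rho(L)/w_h=0.
\]
It remains to show \(\operatorname{aff}(\operatorname{supp}\rho^h)=H\). Suppose instead that \(\operatorname{supp}\rho^h\) lay in a proper affine subspace \(A\) of \(H\). Then \(A\) is contained in some proper affine hyperplane \(L\), and \(\rho^h(L)=1\), contradicting the previous display. (Alternatively, \(G_h\) has positive \(\rho\)-mass and nonempty interior, but the hyperplane argument alone suffices.) So \(\rho^h\) is a regular reference law. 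Note also that \(\rho^h\) is supported in \(G_h\cap X\subseteq X\), so \(\mathsf{MPC}_{\lvert I_h\rvert}(z^h,\rho^h)\) lives in \(X^{\lvert I_h\rvert}\) as required.

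\textbf{Reduction to the face.} \(F_a\) is the set of maximizers of the linear functional \(\Lambda_a\) over the convex set \(\mathsf{MPC}_r(z,\rho)\), so it is an exposed face, hence a face. Take \(M\in F_a\). Suppose \(M=tN+(1-t)N'\) with \(N,N'\in\mathsf{MPC}_r(z,\rho)\) and \(t\in(0,1)\). Then
\[
\Lambda_a(M)=t\Lambda_a(N)+(1-t)\Lambda_a(N')\leq\max\Lambda_a=\Lambda_a(M),
\]
so both inequalities are equalities and \(N,N'\in F_a\). Therefore \(M\) is extreme in \(\mathsf{MPC}_r(z,\rho)\) if and only if it is extreme in \(F_a\).

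\textbf{Transfer through the product.} By \Cref{clm:semidiscrete-direct-face}, the coordinate-grouping map is an affine bijection from \(F_a\) onto the product \eqref{eq:semidiscrete-face-product}. Affine bijections between convex sets map extreme points onto extreme points, since both the map and its inverse preserve convex combinations. So \(M\) is extreme in \(F_a\) if and only if its image is extreme in the product.

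It remains to show that a point of a product \(C_1\times\cdots\times C_s\) of convex sets is extreme if and only if each coordinate is extreme.
\begin{itemize}
\item If some block \(x_h\) equals \(\tfrac12(y+y')\) with \(y\neq y'\) in \(C_h\), replace the \(h\)th block by \(y\) and by \(y'\), keeping the other blocks fixed. This gives two distinct points of the product averaging to \(x\), so \(x\) is not extreme.
\item Conversely, suppose \(x=tx'+(1-t)x''\) with \(t\in(0,1)\) and every block extreme. Each block then satisfies \(x_h=tx'_h+(1-t)x''_h\), so \(x'_h=x''_h=x_h\). Hence \(x'=x''=x\), and \(x\) is extreme.
\end{itemize}

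Chaining the three equivalences proves the claim. I expect no real obstacle here. The only point needing care is justifying that \(F_a\) is a face, which is immediate because it is exposed by \(\Lambda_a\).
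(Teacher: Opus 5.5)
Your proposal is correct and follows the paper's proof essentially step for step: regularity of \(\rho^h\) comes from \(\rho^h\ll\rho\) together with the hyperplane argument for the affine span, and extremality is transferred via the face property of \(F_a\), the affine bijection of \Cref{clm:semidiscrete-direct-face}, and the blockwise characterization of extreme points of a product. The only difference is that you prove the routine facts the paper merely asserts: that an exposed face is a face, that affine bijections preserve extreme points, and the product lemma.
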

\begin{proof}[Proof of \Cref{clm:semidiscrete-recursive-extremality}]
Since \(\rho^h\ll\rho\),\footnote{This notation means that \(\rho^h\) is absolutely continuous with respect to \(\rho\).} every proper affine hyperplane is \(\rho^h\)-null. If \(\operatorname{aff}(\operatorname{supp}\rho^h)\) were a proper affine subspace of \(H\), it would be contained in a proper affine hyperplane to which \(\rho^h\) assigns probability one, a contradiction. Hence, \(\rho^h\) is regular.

Because \(F_a\) is a face, a vector in \(F_a\) is extreme in \(\mathsf{MPC}_r(z,\rho)\) if and only if it is extreme in \(F_a\). The affine bijection in \Cref{clm:semidiscrete-direct-face} preserves extreme points, and a point of the Cartesian product in \eqref{eq:semidiscrete-face-product} is extreme if and only if \((m_i)_{i\in I_h}\) is extreme in \(\mathsf{MPC}_{\lvert I_h\rvert}(z^h,\rho^h)\) for every \(h\).
\end{proof}

If \(s=1\), \eqref{eq:semidiscrete-barycenter-constraint} makes \(\Lambda_a\) constant. If \(s\geq2\), it is nonconstant on the affine hull because the \(a_i\) are not all equal. Thus, every proper exposed face has the product representation in \eqref{eq:semidiscrete-face-product}, with \(s\geq2\) and \(\lvert I_h\rvert<r\), and \Cref{clm:semidiscrete-recursive-extremality} applies to it.

We prove \Cref{it:semidiscrete-extreme-points} by induction on \(r\). For \(r=1\), \(\mathsf{MPC}_1((1),\rho)\) is a singleton. Let \(r\geq2\), and suppose the assertion holds for every smaller number of coordinates and every regular reference law.

Suppose first that \(M\) is extreme in \(\mathsf{MPC}_r(z,\rho)\). By \Cref{it:semidiscrete-affine-hull}, \(\mathsf{MPC}_r(z,\rho)\) has dimension \((r-1)d>0\), so \(M\) lies in its relative boundary. The finite-dimensional supporting-hyperplane theorem places \(M\) in a proper exposed face \(F_a\). By \Cref{clm:semidiscrete-recursive-extremality}, \((m_i)_{i\in I_h}\) is extreme in \(\mathsf{MPC}_{\lvert I_h\rvert}(z^h,\rho^h)\) for every \(h\). The induction hypothesis supplies, under \(\rho^h\), a hierarchical power diagram with cell masses \(z^h\) and barycenter vector \((m_i)_{i\in I_h}\). Refining each \(G_h\) by that diagram produces a hierarchical power diagram with cell masses \(z\) and barycenter vector \(M\).

Conversely, suppose that a hierarchical power diagram induces \(M\), and write \(K_i\) for its \(i\)th cell. Fix a sequence of refinements that constructs this diagram. Let \((G_1,\ldots,G_s)\) be the first power diagram in this sequence with at least two cells, and let \(I_h\) contain the indices of the final cells obtained by refining \(G_h\). Set \(w_h\coloneqq\rho(G_h)=\sum_{i\in I_h}z_i\), \(z^h\coloneqq(z_i/w_h)_{i\in I_h}\), and \(\rho^h\coloneqq\rho|_{G_h}/w_h\). Write the affine score generating \(G_h\) as \(p\mapsto\langle a^h,p-m_\rho\rangle+c_h\), and set \(a_i\coloneqq a^h\) for every \(i\in I_h\). \Cref{clm:semidiscrete-recursive-extremality} implies that each \(\rho^h\) is a regular reference law.

Because \(\rho^h\) is concentrated on \(G_h\), the remaining refinements inside \(G_h\) induce, up to \(\rho^h\)-null sets, a hierarchical power diagram with probability vector \(z^h\) and barycenter vector \((m_i)_{i\in I_h}\). By the induction hypothesis, this block is extreme in \(\mathsf{MPC}_{\lvert I_h\rvert}(z^h,\rho^h)\). Set \(\eta_i\coloneqq\rho|_{K_i}/z_i\). Since the cells indexed by \(I_h\) partition \(G_h\) up to a \(\rho\)-null set,
\[
\sum_{i\in I_h}\frac{z_i}{w_h}\eta_i=\frac{\rho|_{G_h}}{w_h}=\rho^h.
\]
Hence, \Cref{clm:semidiscrete-face-disintegrations} places \(M\) in the face \(F_a\). Since \((m_i)_{i\in I_h}\) is extreme in \(\mathsf{MPC}_{\lvert I_h\rvert}(z^h,\rho^h)\) for every \(h\), \Cref{clm:semidiscrete-recursive-extremality} implies that \(M\) is extreme in \(\mathsf{MPC}_r(z,\rho)\).\end{proof}

\subsection{Proof of \texorpdfstring{\Cref{lem:fixed-probability-transfer}}{lemma}}
\begin{proof}[Proof of \Cref{lem:fixed-probability-transfer}]
By the overlap hypothesis, some set
\[
A_n\coloneqq
\left\{
p\in X\colon
q_i(p)\geq\frac{1}{n},\ q_j(p)\geq\frac{1}{n}
\right\}
\]
has positive \(\rho\)-probability. The restriction \(\rho|_{A_n}\) has full affine span in \(H\).\footnote{That is, \(\operatorname{aff}(\operatorname{supp}(\rho|_{A_n}))=H\).} Otherwise, its support would be contained in a proper affine hyperplane of \(H\) to which \(\rho\) assigns positive probability, contrary to regularity.

Let \(E\coloneqq H-H\), set \(m_n\coloneqq\rho(A_n)^{-1}\int_{A_n}p\,\rho(\mathrm dp)\), and define \(\Gamma_n\colon E\to E\) by
\[
\Gamma_n a\coloneqq
\int_{A_n}
\left\langle a,p-m_n\right\rangle
\left(p-m_n\right) \rho(\mathrm dp).
\]
If \(\langle a,\Gamma_n a\rangle=0\), then \(p\mapsto\langle a,p-m_n\rangle\) vanishes on \(\operatorname{supp}(\rho|_{A_n})\). The full affine span of that support implies \(a=0\), so \(\Gamma_n\) is invertible.

For \(\xi\in E\), define
\(f_\xi(p)\coloneqq
\1_{A_n}(p)
\left\langle\Gamma_n^{-1}\xi,p-m_n\right\rangle\); and so
\[
\int_Xf_\xi(p) \rho(\mathrm dp)=0
\qquad \text{\&} \qquad
\int_Xp f_\xi(p) \rho(\mathrm dp)=\xi.
\]
The function \(f_\xi\) is bounded because \(X\) is compact. For \(\varepsilon>0\) sufficiently small, set \(q_i^\pm\coloneqq q_i\pm\varepsilon f_\xi\), \(q_j^\pm\coloneqq q_j\mp\varepsilon f_\xi\), and \(q_h^\pm\coloneqq q_h\) for \(h\notin\{i,j\}\). Since \(f_\xi\) vanishes outside \(A_n\) and \(q_i,q_j\geq1/n\) on \(A_n\), each \(q_h^\pm\) is nonnegative. The perturbed functions sum to one, preserve every message probability, and shift the first moments of messages \(i\) and \(j\) by \(\pm\varepsilon\xi\) and \(\mp\varepsilon\xi\), respectively. They, therefore, implement vectors \(M^+,M^-\in\mathsf{MPC}_r(z,\rho)\) with the coordinates in the statement and \(M=(M^++M^-)/2\).

Suppose \(m_i\neq m_j\), and set \(\xi=m_i-m_j\). Write
\[
\overline m\coloneqq\frac{z_im_i+z_jm_j}{z_i+z_j}\qquad \text{\&} \qquad
t\coloneqq1+\frac{\varepsilon(z_i+z_j)}{z_iz_j}>1.
\]
Then \(m_i^+=\overline m+t(m_i-\overline m)\) and \(m_j^+=\overline m+t(m_j-\overline m)\). The following kernel is a martingale coupling from the normalized two-point law on \(m_i,m_j\) to the normalized two-point law on \(m_i^+,m_j^+\):
\[
K_{hg}\coloneqq\frac{1}{t}\mathbf 1_{\{h=g\}}+
\left(1-\frac{1}{t}\right)\frac{z_g}{z_i+z_j},
\qquad \text{for } h,g\in\{i,j\}.
\]
The kernel is stochastic, leaves the normalized weights \((z_i,z_j)/(z_i+z_j)\) invariant, and satisfies \(\sum_{g\in\{i,j\}}K_{hg}m_g^+=m_h\) for each \(h\in\{i,j\}\). Keeping every other atom fixed yields \(\nu_z(M)\preceq\nu_z(M^+)\). Evaluating the function \(p\mapsto\left\|p-\overline m\right\|^2\) reveals \(\nu_z(M)\prec\nu_z(M^+)\) because
\[z_i\left\|m_i^+-\overline m\right\|^2+z_j\left\|m_j^+-\overline m\right\|^2 =t^2\left(
z_i\left\|m_i-\overline m\right\|^2+z_j\left\|m_j-\overline m\right\|^2
\right),
\]
which is strictly larger than \(z_i\left\|m_i-\overline m\right\|^2+z_j\left\|m_j-\overline m\right\|^2\).
\end{proof}

\subsection{Proof of \texorpdfstring{\Cref{lem:extreme-unique-disintegration}}{lemma}}

\begin{proof}[Proof of \Cref{lem:extreme-unique-disintegration}]
Existence follows from \eqref{eq:semidiscrete-disintegration}. Let \(q\coloneqq(q_1,\ldots,q_r)\) satisfy \eqref{eq:semidiscrete-density-representation}. If \(q(p)\) is not a vertex of \(\Delta_r\) on a set of positive \(\rho\)-probability, then some pair \(i\neq j\) satisfies \(q_i>0\) and \(q_j>0\) on a set of positive probability. Since \(d>0\), choose a nonzero \(\xi_0\in H-H\). Applying \Cref{lem:fixed-probability-transfer} with \(\xi=\xi_0\) produces distinct \(M^+,M^-\in\mathsf{MPC}_r(z,\rho)\) with \(M=(M^++M^-)/2\), contrary to extremality. Hence, every implementation of \(M\) is deterministic.

Suppose \(q\) and \(q'\) are two distinct implementations. Both are deterministic, while \((q+q')/2\) is an implementation that is not deterministic on the positive-probability set on which they differ,contradicting the fact that every implementation of \(M\) is deterministic. Thus, the implementation is unique. Writing \(D_i\coloneqq\{p\in X\colon q_i(p)=1\}\) produces the asserted partition and \(\eta_i=\rho|_{D_i}/z_i\). If a hierarchical power diagram \((K_i)_i\) induces \(M\), then \(q_i=\mathbf 1_{K_i}\) is an implementation, so uniqueness implies \(D_i=K_i\) \(\rho\)-almost everywhere.
\end{proof}

\subsection{Proof of \texorpdfstring{\Cref{lem:same-vocabulary-improvement}}{lemma}}
\begin{proof}[Proof of \Cref{lem:same-vocabulary-improvement}]
Since \(M\) is not extreme, there are distinct \(C,D\in\mathsf{MPC}_r(z,\rho)\) and \(t\in(0,1)\) such that \(M=tC+(1-t)D\). Fix \(0<\delta<\min\{t,1-t\}\), and set
\[
M^0\coloneqq M-\delta(D-C)
\qquad \text{\&} \qquad
M^1\coloneqq M+\delta(D-C).
\]
Then \(M^0,M^1\in\mathsf{MPC}_r(z,\rho)\), \(M^0\neq M^1\), and \(M=(M^0+M^1)/2\). Let \(q^0\) and \(q^1\) implement \(M^0\) and \(M^1\), and set \(q\coloneqq(q^0+q^1)/2\). Then \(q\) implements \(M\).

The rules \(q^0\) and \(q^1\) differ on a set of positive \(\rho\)-probability. On that set, \(q\) cannot be a vertex of \(\Delta_r\), because a vertex cannot be the average of two distinct points in \(\Delta_r\). Hence, some pair \(i\neq j\) satisfies \(q_i>0\) and \(q_j>0\) on a set of positive probability. Since the coordinates of \(M\) are distinct, \(m_i\neq m_j\). Apply \Cref{lem:fixed-probability-transfer} with \(\xi=m_i-m_j\). It yields vectors \(M'\) arbitrarily close to \(M\) satisfying \(\nu_z(M)\prec\nu_z(M')\). Sufficient closeness preserves distinctness of the coordinates.
\end{proof}

\subsection{Proof of \texorpdfstring{\Cref{lem:hierarchical-rigidity}}{lemma}}

\begin{proof}[Proof of \Cref{lem:hierarchical-rigidity}]
Write \(\lambda=\sum_{j=1}^sw_j\delta_{y_j}\), where the \(y_j\) are distinct and \(w_j>0\). By Strassen's theorem, there are numbers \(\pi_{ij}\geq0\) satisfying
\[
\sum_{j=1}^s\pi_{ij}=z_i
\quad \forall i,
\qquad
\sum_{i=1}^r\pi_{ij}=w_j
\quad\forall j,
\qquad \text{\&} \qquad
\sum_{j=1}^s\pi_{ij}y_j=z_im_i
\quad\forall i.
\]

Applying \eqref{eq:semidiscrete-disintegration} to \(\lambda\preceq\mu\), choose probability measures \(\theta_1,\ldots,\theta_s\) satisfying
\[
\sum_{j=1}^sw_j\theta_j=\mu
\qquad \text{\&} \qquad
m_{\theta_j}=y_j
\quad\text{for every }j.
\]

For each \(i\), define
\(\eta_i\coloneqq\frac{1}{z_i}\sum_{j=1}^s\pi_{ij}\theta_j\), and because \(\sum_{j=1}^s\pi_{ij}=z_i\), each \(\eta_i\) is a probability measure. Moreover,
\[\sum_{i=1}^rz_i\eta_i
=\sum_{i=1}^r\sum_{j=1}^s\pi_{ij}\theta_j
=\sum_{j=1}^s\left(\sum_{i=1}^r\pi_{ij}\right)\theta_j
=\sum_{j=1}^sw_j\theta_j
=\mu.\]
For every \(i\),
\[m_{\eta_i} =\frac{1}{z_i}\sum_{j=1}^s\pi_{ij}m_{\theta_j} =\frac{1}{z_i}\sum_{j=1}^s\pi_{ij}y_j =m_i.\]
Hence, \(\eta_1,\ldots,\eta_r\) satisfy \eqref{eq:semidiscrete-disintegration}.

By \Cref{it:semidiscrete-extreme-points} of \Cref{prop:semidiscrete-convex-order}, the barycenter vector of \(\nu\) is extreme. Hence, \Cref{lem:extreme-unique-disintegration} implies \(\eta_i=\frac{\mu|_{K_i}}{z_i}\) for every \(i\).

Suppose \(\pi_{ij}>0\). Since \(\eta_i(H\setminus K_i)=0\),
\[
0=\eta_i(H\setminus K_i)
\geq\frac{\pi_{ij}}{z_i}\theta_j(H\setminus K_i),
\]
and so \(\theta_j(K_i)=1\). The same index \(j\) cannot satisfy both \(\pi_{ij}>0\) and \(\pi_{i'j}>0\) for distinct \(i\) and \(i'\). Indeed, those inequalities would imply \(\theta_j(K_i\cap K_{i'})=1\). Since \(\sum_{j=1}^sw_j\theta_j=\mu\) and \(w_j>0\), we have \(\theta_j\ll\mu\). The cells \(K_i\) and \(K_{i'}\) are disjoint up to a \(\mu\)-null set, so \(\theta_j(K_i\cap K_{i'})=0\), contradicting \(\theta_j(K_i\cap K_{i'})=1\).

Every \(i\) must be paired with at least one \(j\), because \(\sum_j\pi_{ij}=z_i>0\), and each \(j\) can be paired with at most one \(i\). Consequently, \(s\geq r\). If \(s=r\), every \(i\) is paired with exactly one \(j\), and every \(j\) is paired with exactly one \(i\). After relabeling, the coupling identities and the equality \(\eta_i=\mu|_{K_i}/z_i\) deliver
\[
w_i=z_i,
\qquad
y_i=m_i,
\qquad \text{\&} \qquad
\theta_i=\frac{\mu|_{K_i}}{z_i}.
\]
Thus, \(\lambda=\nu\), contrary to the hypothesis. Therefore, \(s\geq r+1\).
\end{proof}

\subsection{Proof of \texorpdfstring{\Cref{lem:one-more-message}}{lemma}}

\begin{proof}[Proof of \Cref{lem:one-more-message}]
Write \(\nu=\sum_{i=1}^rz_i\delta_{m_i}\). By \eqref{eq:semidiscrete-disintegration}, choose probability measures \(\theta_1,\ldots,\theta_r\) satisfying
\[
\sum_{i=1}^rz_i\theta_i=\mu
\qquad \text{\&} \qquad
m_{\theta_i}=m_i
\quad\text{for every }i.
\]
Since every \(z_i>0\), the first equality implies \(\theta_i\ll\mu\) for every \(i\). Hence, \Cref{ass:regular-posterior} implies that every proper affine hyperplane is \(\theta_i\)-null. Fix an index \(i\), and choose a nonconstant affine function \(a\colon H\to\mathbb R\). Every level set of \(a\) is a proper affine hyperplane, so the distribution of \(a\) under \(\theta_i\) is atomless. It is also nondegenerate, because otherwise \(\theta_i\) would be supported on one such hyperplane. For \(c\in\mathbb R\), set
\[
A_c^+\coloneqq\{p\in H\colon a(p)>c\} \qquad \text{\&} \qquad
A_c^-\coloneqq\{p\in H\colon a(p)<c\}.
\]
Choose \(c\) so that both sets have positive \(\theta_i\)-probability, and set
\[
q\coloneqq\theta_i(A_c^+),
\qquad
m_i^+\coloneqq\frac{1}{q}\int_{A_c^+}p\,\theta_i(\mathrm dp),
\qquad \text{\&} \qquad
m_i^-\coloneqq\frac{1}{1-q}\int_{A_c^-}p\,\theta_i(\mathrm dp).
\]
Since \(\theta_i(\{p\colon a(p)=c\})=0\) and \(a\) is affine,
\[
m_i=qm_i^++(1-q)m_i^- \qquad \text{\&} \qquad
a(m_i^+)>c>a(m_i^-).
\]
In particular, \(m_i^+\neq m_i^-\).
Define
\[
\lambda\coloneqq
\sum_{h\neq i}z_h\delta_{m_h}
+z_iq\delta_{m_i^+}
+z_i(1-q)\delta_{m_i^-}.
\]
Replacing \(m_i\) by a nondegenerate two-point distribution with mean \(m_i\) yields \(\nu\preceq\lambda\). The comparison is strict because
\[
q\left\|m_i^+\right\|^2
+(1-q)\left\|m_i^-\right\|^2
-\left\|m_i\right\|^2
=q(1-q)\left\|m_i^+-m_i^-\right\|^2
>0.
\]
Splitting \(\theta_i\) into its conditional laws on \(A_c^+\) and \(A_c^-\) delivers a martingale disintegration of \(\mu\) over \(\lambda\), so \(\lambda\preceq\mu\). The law \(\lambda\) has at most \(r+1\) support points.

If \(\nu\) is induced by a hierarchical power diagram \((K_1,\ldots,K_r)\), \Cref{it:semidiscrete-extreme-points} of \Cref{prop:semidiscrete-convex-order} implies that its barycenter vector is extreme. Hence, \Cref{lem:extreme-unique-disintegration} implies \(\theta_i=\mu|_{K_i}/z_i\). Refining \(K_i\) by the two-cell power diagram determined by \(a(p)=c\) produces a hierarchical power diagram inducing \(\lambda\). Finally, \Cref{lem:hierarchical-rigidity} implies that a strict improvement has at least \(r+1\) support points. Thus, \(\lambda\) has exactly \(r+1\) support points and is a hierarchical power diagram summary.
\end{proof}

\subsection{Proof of \texorpdfstring{\Cref{thm:maximal-summary-frontier}}{theorem}} \begin{proof}[Proof of \Cref{thm:maximal-summary-frontier}]
Let \(r\coloneqq\left|\operatorname{supp}\nu\right|\), write \(\nu=\sum_{i=1}^rz_i\delta_{m_i}\) with distinct \(m_i\), and set \(z\coloneqq(z_1,\ldots,z_r)\) and \(M\coloneqq(m_1,\ldots,m_r)\).

Suppose first that \(\nu\) is maximal in \(\mathcal S_k(\mu)\). If \(r<k\), \Cref{lem:one-more-message} provides a strict improvement with at most \(r+1\leq k\) support points, contradicting maximality. Hence, \(r=k\). If \(M\) were not extreme in \(\mathsf{MPC}_k(z,\mu)\), \Cref{lem:same-vocabulary-improvement} would provide a strict improvement with at most \(k\) support points, again contradicting maximality. Therefore, \(M\) is extreme. By \Cref{it:semidiscrete-extreme-points} of \Cref{prop:semidiscrete-convex-order}, \(\nu\) is a hierarchical power diagram summary.

Suppose next that \(\nu\) has exactly \(k\) support points and is a hierarchical power diagram summary. By \Cref{lem:hierarchical-rigidity}, every strict finite improvement has at least \(k+1\) support points. Thus, \(\nu\) is maximal in \(\mathcal S_k(\mu)\). Combining the lower bound from \Cref{lem:hierarchical-rigidity} with \Cref{lem:one-more-message} produces \(\kappa_\mu(\nu)=k+1\).

Finally, suppose \(\kappa_\mu(\nu)=k+1\). By \Cref{lem:one-more-message}, \(k+1=\kappa_\mu(\nu)\leq r+1\). Since \(r\leq k\), \(r=k\). If \(M\) were not extreme in \(\mathsf{MPC}_k(z,\mu)\), \Cref{lem:same-vocabulary-improvement} would imply \(\kappa_\mu(\nu)\leq k\), a contradiction. Hence, \(M\) is extreme, and \Cref{it:semidiscrete-extreme-points} of \Cref{prop:semidiscrete-convex-order} implies that \(\nu\) is a hierarchical power diagram summary.
\end{proof}

\subsection{Proof of \texorpdfstring{\Cref{cor:vocabulary-efficiency,cor:optimal-k-message-summary}}{cor}}
\begin{proof}[Proof of \Cref{cor:vocabulary-efficiency}]
Write \(\nu=\sum_{i=1}^rz_i\delta_{m_i}\) with distinct \(m_i\). %, and set \(z\coloneqq(z_1,\ldots,z_r)\) and \(M\coloneqq(m_1,\ldots,m_r)\). 
If \(\nu\) is a hierarchical power diagram summary, \Cref{lem:hierarchical-rigidity,lem:one-more-message} imply \(\kappa_\mu(\nu)=r+1\). If \(\nu\) is not a hierarchical power diagram summary, \Cref{it:semidiscrete-extreme-points} of \Cref{prop:semidiscrete-convex-order} implies that \(M\) is not extreme in \(\mathsf{MPC}_r(z,\mu)\), and \Cref{lem:same-vocabulary-improvement} implies \(\kappa_\mu(\nu)\leq r\).
\end{proof}

\begin{proof}[Proof of \Cref{cor:optimal-k-message-summary}]
The set \(\mathcal S_k(\mu)\) is compact. Indeed, the laws on \(X\) with at most \(k\) support points are the continuous image of \(\{z\in[0,1]^k\colon\sum_{i=1}^kz_i=1\}\times X^k\), and the convex-order restriction is closed.\footnote{Because \(X\) is compact, the map \(\eta\mapsto\int_X\varphi(p)\eta(\mathrm dp)\) is continuous under weak convergence for every continuous convex function \(\varphi\colon X\to\mathbb R\). Hence, each inequality defining \(\eta\preceq\mu\) is closed.} Since \(W_u\) is continuous, \(\mathcal V_u\) has a maximizer. Among its maximizers, choose \(\nu\) to maximize \(\int_X\lVert p\rVert^2\,\nu(\mathrm dp)\).

Suppose \(\nu\) is not maximal in \(\mathcal S_k(\mu)\). Then some \(\lambda\in\mathcal S_k(\mu)\) satisfies \(\nu\prec\lambda\). The convexity of \(W_u\) implies \(\mathcal V_u(\nu)\leq\mathcal V_u(\lambda)\), while the optimality of \(\nu\) forces equality. Thus, \(\lambda\) is also a maximizer. By Strassen's theorem, there are random vectors \(S\sim\nu\) and \(T\sim\lambda\) satisfying \(\operatorname{E}[T\mid S]=S\). Since \(\lambda\neq\nu\),
\[
\operatorname{E}\left[\lVert T\rVert^2\right]
-\operatorname{E}\left[\lVert S\rVert^2\right]
=\operatorname{E}\left[\lVert T-S\rVert^2\right]
>0,
\]
contrary to the choice of \(\nu\). Hence, \(\nu\) is maximal, and \Cref{thm:maximal-summary-frontier} implies that it has exactly \(k\) support points and is a hierarchical power diagram summary.
\end{proof}

\subsection{Proof of \texorpdfstring{\Cref{lem:two-action-separation}}{lemma}}
\begin{proof}[Proof of \Cref{lem:two-action-separation}]
By Strassen's theorem, there are random vectors \(P\sim\rho\) and \(Q\sim\nu\) satisfying \(\operatorname{E}[P\mid Q]=Q\). Since \(\nu\neq\rho\), we have \(P\neq Q\) with positive probability. Hence, there is a state \(\widehat\theta\in\Theta\) such that, setting \(Z\coloneqq P(\widehat\theta)\) and \(Y\coloneqq Q(\widehat\theta)\),
\[
\operatorname{E}[Z^2]-\operatorname{E}[Y^2]
=\operatorname{E}\left[(Z-Y)^2\right]
>0.
\]
Conditional Jensen's inequality implies \(\operatorname{E}[(Z-c)_+]\geq\operatorname{E}[(Y-c)_+]\) for every \(c\in[0,1]\). Moreover,
\[
\operatorname{E}[Z^2]-\operatorname{E}[Y^2]
=2\int_0^1
\left(
\operatorname{E}[(Z-c)_+]
-\operatorname{E}[(Y-c)_+]
\right)
\,\mathrm dc
>0,
\]
where the equality follows from \(x^2=2\int_0^1(x-c)_+\,\mathrm dc\) and Tonelli's theorem. The integrand is nonnegative, so there is \(c\in(0,1)\) such that
\[
\operatorname{E}\left[(Z-c)_+\right]
>
\operatorname{E}\left[(Y-c)_+\right].
\tag{9}\label{eq:two-action-cutoff-gap}
\]

Set \(J\coloneqq\mathbf 1_{\{Z>c\}}\), \(R\coloneqq\operatorname{E}[P\mid J]\), and \(\lambda_2\coloneqq\operatorname{Law}(R)\). \eqref{eq:two-action-cutoff-gap} implies \(\Pr(Z>c)>0\). If \(\Pr(Z\leq c)=0\), then \(Y-c=\operatorname{E}[Z-c\mid Q]>0\) almost surely, so both sides of \eqref{eq:two-action-cutoff-gap} equal \(\operatorname{E}[Z]-c\), a contradiction. Thus, both messages occur with positive probability. Moreover, \(R(\widehat\theta)>c\) on \(\{J=1\}\) and \(R(\widehat\theta)\leq c\) on \(\{J=0\}\). The two conditional means are therefore distinct, so \(\lambda_2\) has exactly two support points. Conditional Jensen's inequality applied to \(R=\operatorname{E}[P\mid J]\) implies \(\lambda_2\preceq\rho\).

Consider the two actions \(a_0\) and \(a_1\) with payoffs \(u(a_0,\theta)\coloneqq0\) and \(u(a_1,\theta)\coloneqq\mathbf 1_{\{\theta=\widehat\theta\}}-c\). At posterior \(p\), action \(a_0\) has expected payoff zero, while action \(a_1\) has expected payoff \(p(\widehat\theta)-c\). Consequently, \(W_u(p)=(p(\widehat\theta)-c)_+\). Because \(R(\widehat\theta)>c\) on \(\{J=1\}\) and \(R(\widehat\theta)\leq c\) on \(\{J=0\}\), conditioning on \(J\) yields \(\mathcal V_u(\lambda_2) =\operatorname{E}\left[(Z-c)_+\right] =\mathcal V_u(\rho) > \mathcal V_u(\nu)\).\end{proof}

\subsection{Proof of \texorpdfstring{\Cref{thm:no-best-summary}}{theorem}}
\begin{proof}[Proof of \Cref{thm:no-best-summary}]
Fix \(\nu\in\mathcal S_k(\mu)\). Because \(d>0\), every singleton lies in a proper affine hyperplane of \(H\). Hence, \Cref{ass:regular-posterior} makes \(\mu\) nonatomic. The law \(\nu\) is finite, so \(\nu\neq\mu\). Apply \Cref{lem:two-action-separation} with \(\rho=\mu\) to obtain a law \(\lambda_2\preceq\mu\) with exactly two support points and a two-action decision problem \(u\) such that \(\mathcal V_u(\lambda_2)=\mathcal V_u(\mu)>\mathcal V_u(\nu)\).

Since \(k\geq2\), the law \(\lambda_2\) belongs to \(\mathcal S_k(\mu)\). Apply \Cref{cor:optimal-k-message-summary} to \(u\), and let \(\lambda\) be an optimizer with exactly \(k\) support points that is a hierarchical power diagram summary. By \Cref{thm:maximal-summary-frontier}, the summary \(\lambda\) is maximal in \(\mathcal S_k(\mu)\). Its optimality implies \(\mathcal V_u(\lambda)\geq\mathcal V_u(\lambda_2)\), while \(\lambda\preceq\mu\) and the convexity of \(W_u\) imply \(\mathcal V_u(\lambda)\leq\mathcal V_u(\mu)\). Therefore, \(\mathcal V_u(\lambda)=\mathcal V_u(\mu)>\mathcal V_u(\nu)\). If \(\nu\) were greatest in \(\mathcal S_k(\mu)\), then \(\lambda\preceq\nu\), which would imply \(\mathcal V_u(\lambda)\leq\mathcal V_u(\nu)\). The strict inequality rules this out. %Hence, \(\mathcal S_k(\mu)\) has no greatest element.
\end{proof}

\end{document}